\newcommand{\blind}{1}
\renewcommand\thmcontinues[1]{Continued}
\newtheorem{theorem}{Theorem}
\newtheorem{lemma}[theorem]{Lemma}
\theoremstyle{definition}
\newcommand{\openr}{\hbox{${\rm I\kern-.2em R}$}}
\newcommand{\openn}{\hbox{${\rm I\kern-.2em N}$}}
\newcommand{\Rem}{\operatorname{Rem}}
\newcommand{\Fnorm}[1]{\left\lVert#1\right\rVert_{\mathcal{F}_n}}
\newcommand{\LBn}{\textnormal{LB}_n}
\newcommand{\UBn}{\textnormal{UB}_n}
\DeclareMathOperator{\Var}{Var}
\DeclareMathOperator{\Corr}{Corr}
\DeclareMathOperator{\Cov}{Cov}
\DeclareMathOperator{\Prob}{Pr}
\DeclareMathOperator{\sgn}{sgn}
\newcommand{\Knstar}{\mathcal{K}_n^\star}
\begin{document}

\def\spacingset#1{\renewcommand{\baselinestretch}%
{#1}\small\normalsize} \spacingset{1}


\if1\blind
{
  \title{\bf Parametric-Rate Inference for One-Sided Differentiable Parameters}
  \author{Alexander R. Luedtke\thanks{
    This author gratefully acknowledges the support of the Berkeley Fellowship.}\\
    and \vspace{1em}\\
    Mark J. van der Laan\thanks{
    This author gratefully acknowledges the support of NIH grant R01 AI074345-06.}\vspace{1em}\\
    Division of Biostatistics, University of California, Berkeley}
  \maketitle
} \fi

\if0\blind
{
  \bigskip
  \bigskip
  \bigskip
  \begin{center}
    {\LARGE\bf Title}
\end{center}
  \medskip
} \fi

\bigskip
\begin{abstract}
Suppose one has a collection of parameters indexed by a (possibly infinite dimensional) set. Given data generated from some distribution, the objective is to estimate the maximal parameter in this collection evaluated at this distribution. This estimation problem is typically non-regular when the maximizing parameter is non-unique, and as a result standard asymptotic techniques generally fail in this case. We present a technique for developing parametric-rate confidence intervals for the quantity of interest in these non-regular settings. We show that our estimator is asymptotically efficient when the maximizing parameter is unique so that regular estimation is possible.
We apply our technique to a recent example from the literature in which one wishes to report the maximal absolute correlation between a prespecified outcome and one of $p$ predictors. The simplicity of our technique enables an analysis of the previously open case where $p$ grows with sample size. Specifically, we only require that $\log p$ grows slower than $\sqrt{n}$, where $n$ is the sample size. We show that, unlike earlier approaches, our method scales to massive data sets: the point estimate and confidence intervals can be constructed in $O(np)$ time.
\end{abstract}

\noindent%
{\it Keywords:} stabilized one-step estimator; non-regular inference; variable screening.
\vfill

\newpage
\spacingset{1.45} 
\section{Introduction}
Many semiparametric and nonparametric estimation problems yield estimators which achieve a parametric rate of convergence. These estimators are often asymptotically linear, in that they can be written as an empirical mean of an influence function applied to the data. Valid choices of the influence function can be derived as gradients for a functional derivative of the parameter of interest. Applying the central limit theorem then immediately yields Wald-type confidence intervals which achieve the desired parametric rate. Such problems have been studied in depth over the past several decades \citep{Pfanzagl1990,Vandervaart91,Bickel1993,vdL02}.

While remarkably general, these approaches rely on the key condition that the parameter of interest is sufficiently differentiable for such a gradient to exist. Statisticians are increasingly encountering problems for which parametric-rate estimation is theoretically possible but the parameter is insufficiently differentiable to yield a standard first-order expansion demanded by older techniques. For example, suppose we observe baseline covariates, a binary treatment, and an outcome occuring after treatment. We wish to learn the mean outcome under the optimal individualized treatment strategy, i.e. the treatment strategy which makes treatment decisions which are allowed to use baseline covariate information to make treatment decisions \cite{Chakraborty&Moodie2013}. As another example, suppose we observe a vector of covariates $(X_1,\ldots,X_p)$ and an outcome $Y$. We wish give a confidence interval the maximal absolute correlation between a covariate $X_k$ and $Y$, or at least a lower bound on this quantity since this will suffice for a variable screening procedure. Alternatively, we may only with to test the null hypothesis that the maximal absolute correlation is zero. \cite{McKeague&Qian2015} provide a test of this null hypothesis using an adaptive resampling test (ART), a framework initially introduced in \cite{Laber&Murphy11} for estimating classification error.

These problems belong to a larger class of problems in which one observes $O_1,\ldots,O_n$ drawn independently from a $P_0$ in some (possibly nonparametric) statistical model $\mathcal{M}$ and wishes to estimate
\begin{align}
\Psi_n(P)&\equiv \max_{d\in\mathcal{D}_n}\Psi^d(P), \label{eq:estproblem}
\end{align}
at $P=P_0$, where $\mathcal{D}_n$ is an index set that may rely on sample size and each $\Psi^d$ is a sufficiently differentiable parameter to permit parametric-rate estimation using classical methods such as those presented in \cite{Bickel1993}. When there is no unique maximizer $d\in\mathcal{D}_n$ of $\Psi^d(P)$, then the inference problem is typically non-regular, in the sense that the parameter $P\mapsto \max_{d\in\mathcal{D}_n}\Psi^d(P)$ is not sufficiently differentiable to allow the use of standard techniques for obtaining root-$n$ inference. In these cases, the parameter of interest is termed one-sided pathwise differentiable \citep{Hirano&Porter2012}. In univariate calculus, functions such as $f(x) = \max \{x,0\}$ are one-sided differentiable at zero in that the left and right limits of $[f(x+\epsilon)-f(x)]/\epsilon$ are well-defined but disagree. The same holds for the $\Psi_n$ evaluated at a distribution $P_0$, but now the one-sided differentiability is caused by the subset of $\mathcal{D}_n$ containing the indices which maximize the expression on the right in (\ref{eq:estproblem}). A small fluctuation in $P_0$ can greatly reduce the subset of maximizing indices, leading to different derivatives depending on the fluctuation taken.

In this work, we present a method which, loosely, splits the sample in such a way that the estimated index in $\mathcal{D}_n$ which maximizes $\Psi^d(P_0)$ is conditioned on so that this estimated index need not have a limit. We do this iteratively to ensure that our estimator gets the full benefit of the sample size $n$. When the parameter is fixed with sample size and the $d$ maximizing $\Psi^d(P_0)$ is fixed, we show that our estimator is asymptotically efficient, and therefore also regular. Thus our estimator adapts to the non-regularity of the estimation problem.

Our estimator is inspired by the online estimator for pathwise differentiable parameters presented in \cite{vanderLaan&Lendle2014} and a subsequent modification of this estimator in \cite{Luedtke&vanderLaan2015b} to deal with the non-regularity when estimating the mean outcome under an optimal treatment rule. Such estimators are designed to be efficient in both computational complexity and storage requirements. We show that the estimator that we present in this work inherits many of these computational efficiency properties. We apply our technique to estimate the maximal absolute correlation considered in \cite{McKeague&Qian2015}. In this problem, we show that our estimator runs efficiently in both dimension and sample size, with a runtime of $O(np)$. In practice, this means that the lead author can implement our estimator using only \texttt{R} code and screen $p=100\,000$ variables using $n=1\,000$ samples on a single core of his laptop in under a minute. Thus our estimator seems to have both the statistically efficiency that has been demanded of estimators for generations and the computational efficiency that is becoming increasingly important in this new big data era.

\section{Estimator} \label{sec:probest}
We will now present our technique for a general estimation problem. Before doing so, we must introduce the notion of pathwise differentiability, since this provides the key object needed to construct our estimator.




\subsection{Pathwise differentiability}
We assume that each parameter $\Psi^d$, $d\in \mathcal{D}_n$ for any $n$, is pathwise differentiable for all distributions in our model \citep[see, e.g.,][]{Pfanzagl1990,Bickel1993}. For each $P\in\mathcal{M}$, we let $D^d(P)$ denote the canonical gradient of $\Psi^d$ at $P$. By definition $D^d(P)(O)$ is mean zero with finite variance under sampling from $P$. Typically pathwise differentiability implies that $\Psi^d$ satisfies the following linear expansion for any $P\in\mathcal{M}$ and $d\in\mathcal{D}_n$:
\begin{align}
\Psi^d(P) - \Psi^d(P_0)&= -\int D^d(P)(o) dP_0(o) + \Rem_n^d(P), \label{eq:psi1stord}
\end{align}
where we omit the dependence of $\Rem_n^d(P)$ on $P_0$ in the notation and indicate its possible dependence on sample size with the subscript $n$. Above $\Rem_n^d(P)$ is a second-order remainder term that is small whenever $P$ is close to $P_0$. We consider this condition more closely in our examples, but for non-sample size dependent parameters this term can typically be made to be $O_{P_0}(1/n)$ in a parametric model and often can be made to be $o_{P_0}(1/\sqrt{n})$ in a nonparametric model. For a more thorough presentation, see \cite{Pfanzagl1990} or \cite{Bickel1993}. 

\subsection{Estimator and confidence interval}
We now present a stabilized one-step estimator for problems of the type found in (\ref{eq:estproblem}) when the required differentiability condition on $\Psi^d$ holds. 

Let $\{\ell_n\}$ be some sequence such that $n-\ell_n\rightarrow \infty$. One possible choice is $\ell_n=0$ for all $n$. For each $j=\ell_n,\ldots,n-1$, let $d_{nj}$ represent an estimate of a maximizer of (\ref{eq:estproblem}), $\hat{P}_{nj}$ be an estimate of the likelihood obtained using observations $\left(O_i : i=1,\ldots,j\right)$, and $\hat{D}_{nj}$ equal $D^d(P)$ evaluated at $P=\hat{P}_{nj}$ and $d=d_{nj}$. For nonnegative weights $w_{\ell_n},\ldots,w_{n-1}$ that we will define shortly with $\sum_{j=\ell_n}^{n-1} w_j=n-\ell_n$, our stabilized one-step estimator takes the form
\begin{align*}
\psi_n&\equiv \frac{1}{n-\ell_n}\sum_{j=\ell_n}^{n-1} w_{j}\left[\Psi^{d_{nj}}(\hat{P}_{nj}) + \hat{D}_{nj}(O_{j+1})\right].
\end{align*}
Our proposed 95\% confidence interval has the form
\begin{align*}
\left[\LBn,\UBn\right]&\equiv\left[\psi_n\pm 1.96\frac{\bar{\sigma}_n}{\sqrt{n-\ell_n}}\right],
\end{align*}
where we will define $\bar{\sigma}_n$ momentarily and one can replace 1.96 by the desired quantile of the normal distribution to modify the confidence level.

We now define the weights. Let $\hat{\sigma}_{nj}^2$ represent an estimate of the variance of $\hat{D}_{nj}(O)$, $O\sim P_0$, conditional on observations $O_1,\ldots,O_j$. This estimate should only rely on those $j$ observations. Often we can let
\begin{align*}
\hat{\sigma}_{nj}^2&\equiv\frac{1}{j}\sum_{i=1}^{j} \left[\hat{D}_{nj}(O_i)-\frac{1}{j}\sum_{i=1}^j \hat{D}_{nj}(O_{i})\right]^2,\;j=\ell_n,\ldots,n-1.
\end{align*}
The standard deviation type variable in the confidence interval definition is given by $\bar{\sigma}_n\equiv \left(\frac{1}{n-\ell_n}\sum_{j=\ell_n}^{n-1} \hat{\sigma}_{nj}^{-1}\right)^{-1}$, and the weights are given by $w_j\equiv \bar{\sigma}_n\hat{\sigma}_{nj}^{-1}$, where we have omitted the possible dependence of the weights on sample size in the notation. 

Our estimator $\psi_n$ is similar to the online one-step estimator developed in \cite{vanderLaan&Lendle2014} for streaming data, but it weights each term proportionally to the estimated inverse standard deviation of $\hat{D}_{nj}(O)$ when $O\sim P_0$. Our confidence interval takes a form similar to a Wald-type confidence interval, but replaces the typical standard deviation with $\bar{\sigma}_n$ and has width on the order of $1/\sqrt{n-\ell_n}$ rather than $1/\sqrt{n}$. Note of course that $\ell_n=o(n)$ implies that $1/\sqrt{n-\ell_n}-1/\sqrt{n}$ converges to zero.

\subsection{First main result: validity of confidence interval}
We now prove the validity of our confidence interval. Let $\sigma_{nj}^2\equiv \Var_{P_0}\left(\hat{D}_{nj}(O)|O_1,\ldots,O_j\right)$. The validity of the lower bound of the confidence interval relies on the following conditions:
\begin{enumerate}[noitemsep,label=C\arabic*)]
  \item There exists some $M<\infty$ such that $\frac{1}{n-\ell_n}\sum_{j=\ell_n}^{n-1} P_0\left(\left.\frac{|\hat{D}_{nj}(O)|}{\hat{\sigma}_{nj}}<M\right|O_0,\ldots,O_{j-1}\right)\rightarrow 1$ in probability as $n\rightarrow\infty$. \label{it:C1}
  \item $\frac{1}{n-\ell_n}\sum_{j=\ell_n}^{n-1}\left|\frac{\sigma_{nj}^2}{\hat{\sigma}_{nj}^2}-1\right|\rightarrow 0$ in probability as $n\rightarrow\infty$. \label{it:C2}
  \item $\frac{1}{\sqrt{n-\ell_n}}\sum_{j=\ell_n}^{n-1}\hat{\sigma}_{nj}^{-1}\widehat{\Rem}_{nj}\rightarrow 0$ in probability as $n\rightarrow\infty$, where $\widehat{\Rem}_{nj}\equiv \Rem^{d_{nj}}(\hat{P}_{nj})$. \label{it:C3}
\end{enumerate}
The validity of the upper bound requires the following additional condition:
\begin{enumerate}[noitemsep,label=C\arabic*),start=4]
  \item $\frac{1}{\sqrt{n-\ell_n}}\sum_{j=\ell_n}^{n-1} \hat{\sigma}_{nj}^{-1}\left[\Psi^{d_{nj}}(P_0)-\Psi_n(P_0)\right]$ converges to zero in probability as $j\rightarrow\infty$. \label{it:C4}
\end{enumerate}
We now present our main result. We discuss the conditions in Section \ref{sec:discconds}.
\begin{theorem}[Validity of confidence interval] \label{thm:mainresult}
If \ref{it:C1}, \ref{it:C2}, and \ref{it:C3} hold, then
\begin{align*}
\liminf_{n\rightarrow\infty}\Prob\left(\Psi_n(P_0)\ge \LBn\right)&\ge 1-\alpha/2.
\end{align*}
If \ref{it:C4} also holds, then
\begin{align*}
\lim_{n\rightarrow\infty}\Prob\left(\LBn\le \Psi_n(P_0)\le \UBn\right)&= 1-\alpha.
\end{align*}
\end{theorem}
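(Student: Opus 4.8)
The plan is to expand the scaled error $\frac{\sqrt{n-\ell_n}}{\bar\sigma_n}\big(\psi_n-\Psi_n(P_0)\big)$ into three terms, one tied to each condition, and then to invoke a martingale central limit theorem together with Slutsky's theorem; throughout, the $(1-\alpha/2)$-quantile $z_{1-\alpha/2}$ replaces the $1.96$ in the interval. Write $P_0\hat D_{nj}\equiv\int\hat D_{nj}(o)\,dP_0(o)$. Using $\sum_{j=\ell_n}^{n-1}w_j=n-\ell_n$ (so that $\Psi_n(P_0)=\frac{1}{n-\ell_n}\sum_j w_j\Psi_n(P_0)$) together with $w_j/\bar\sigma_n=\hat\sigma_{nj}^{-1}$ and the linear expansion \eqref{eq:psi1stord} applied to each summand with $P=\hat P_{nj}$, $d=d_{nj}$, one obtains
\begin{align*}
\frac{\sqrt{n-\ell_n}}{\bar\sigma_n}\big(\psi_n-\Psi_n(P_0)\big)=B_n+M_n+R_n,
\end{align*}
where
\begin{align*}
B_n&\equiv\frac{1}{\sqrt{n-\ell_n}}\sum_{j=\ell_n}^{n-1}\hat\sigma_{nj}^{-1}\big[\Psi^{d_{nj}}(P_0)-\Psi_n(P_0)\big],\\
M_n&\equiv\frac{1}{\sqrt{n-\ell_n}}\sum_{j=\ell_n}^{n-1}\hat\sigma_{nj}^{-1}\big[\hat D_{nj}(O_{j+1})-P_0\hat D_{nj}\big],\\
R_n&\equiv\frac{1}{\sqrt{n-\ell_n}}\sum_{j=\ell_n}^{n-1}\hat\sigma_{nj}^{-1}\widehat{\Rem}_{nj}.
\end{align*}
Two of these are immediate: $B_n\le 0$ because $\Psi^{d_{nj}}(P_0)\le\max_{d\in\mathcal D_n}\Psi^d(P_0)=\Psi_n(P_0)$ for every $j$ while $\hat\sigma_{nj}^{-1}\ge 0$, and $R_n\rightarrow 0$ in probability is exactly \ref{it:C3}.

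The heart of the argument is to show $M_n\rightarrow N(0,1)$ in distribution. Let $\mathcal F_j\equiv\sigma(O_1,\dots,O_j)$. Since $d_{nj}$, $\hat P_{nj}$, $\hat D_{nj}$, and $\hat\sigma_{nj}$ are functions of $O_1,\dots,O_j$ only and $O_{j+1}$ is independent of $\mathcal F_j$, the terms $\xi_{nj}\equiv(n-\ell_n)^{-1/2}\hat\sigma_{nj}^{-1}[\hat D_{nj}(O_{j+1})-P_0\hat D_{nj}]$ satisfy $E[\xi_{nj}\mid\mathcal F_j]=0$, so $\{\xi_{nj}\}$ is a martingale-difference triangular array with $M_n=\sum_{j=\ell_n}^{n-1}\xi_{nj}$. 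Its summed conditional variances equal $\frac{1}{n-\ell_n}\sum_j\sigma_{nj}^2/\hat\sigma_{nj}^2$, which converges to $1$ in probability by \ref{it:C2}. For the conditional Lindeberg condition — $\sum_j E[\xi_{nj}^2\,\mathbf{1}\{|\xi_{nj}|>\varepsilon\}\mid\mathcal F_j]\rightarrow 0$ in probability for each $\varepsilon>0$ — one bounds $\xi_{nj}^2\mathbf{1}\{|\xi_{nj}|>\varepsilon\}\le\xi_{nj}^2$ and notes that on the event $\{|\hat D_{nj}(O_{j+1})|/\hat\sigma_{nj}<M\}$ one has $|\xi_{nj}|\le(n-\ell_n)^{-1/2}(M+|P_0\hat D_{nj}|/\hat\sigma_{nj})$, so for large $n$ the truncated sum is dominated by the contribution of the complementary events together with a term involving the drifts $P_0\hat D_{nj}$; \ref{it:C1} drives the averaged conditional probability of the complementary events to zero, while \ref{it:C2} keeps $\frac{1}{n-\ell_n}\sum_j\sigma_{nj}^2/\hat\sigma_{nj}^2$ bounded, which suffices to conclude. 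A standard martingale central limit theorem for triangular arrays then yields $M_n\rightarrow N(0,1)$ in distribution.

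It remains to assemble the pieces. For the lower bound, $\Psi_n(P_0)\ge\LBn$ is equivalent to $B_n+M_n+R_n\le z_{1-\alpha/2}$; since $B_n\le 0$ this event contains $\{M_n+R_n\le z_{1-\alpha/2}\}$, whose probability converges to $\Phi(z_{1-\alpha/2})=1-\alpha/2$ by the above and Slutsky's theorem, giving $\liminf_n\Prob(\Psi_n(P_0)\ge\LBn)\ge 1-\alpha/2$. If \ref{it:C4} also holds, then $B_n\rightarrow 0$ in probability, so $\frac{\sqrt{n-\ell_n}}{\bar\sigma_n}(\psi_n-\Psi_n(P_0))=B_n+M_n+R_n\rightarrow N(0,1)$ in distribution, and $\Prob(\LBn\le\Psi_n(P_0)\le\UBn)=\Prob\big(|B_n+M_n+R_n|\le z_{1-\alpha/2}\big)\rightarrow\Phi(z_{1-\alpha/2})-\Phi(-z_{1-\alpha/2})=1-\alpha$.

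I expect the main obstacle to be the verification of the conditional Lindeberg condition from \ref{it:C1}. That condition bounds only an averaged conditional \emph{probability} that the uncentered ratio $|\hat D_{nj}|/\hat\sigma_{nj}$ is large, whereas the Lindeberg sum involves truncated conditional \emph{second moments} of the centered increments $\hat D_{nj}(O_{j+1})-P_0\hat D_{nj}$; bridging this gap — in particular controlling the drift $P_0\hat D_{nj}$, which is not a priori bounded by $\hat\sigma_{nj}$ — is where the real work lies, and where \ref{it:C1} and \ref{it:C2} must be combined carefully. Everything else reduces to the expansion \eqref{eq:psi1stord}, the sign of $B_n$, and routine appeals to the martingale CLT and Slutsky's theorem.
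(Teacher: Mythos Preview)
Your proof is correct and follows essentially the same approach as the paper: the same three-term decomposition of $\sqrt{n-\ell_n}\,\bar\sigma_n^{-1}(\psi_n-\Psi_n(P_0))$, the sign observation $B_n\le 0$ from \eqref{eq:estproblem}, \ref{it:C3} for $R_n$, and the martingale CLT (via \ref{it:C1}, \ref{it:C2}) for $M_n$. The paper handles your flagged Lindeberg concern simply by citing the triangular-array martingale CLT of \cite{Gaenssleretal1978}, for which \ref{it:C1} and \ref{it:C2} are tailored, rather than working through the truncation argument explicitly.
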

\begin{proof}
The definition $\psi_n$ combined with (\ref{eq:psi1stord}) yield that
\begin{align}
\sqrt{n-\ell_n}\bar{\sigma}_n^{-1}\left[\psi_n-\Psi_n(P_0)\right]=&\; \frac{1}{\sqrt{n-\ell_n}}\sum_{j=\ell_n}^{n-1} \hat{\sigma}_{nj}^{-1}\left(\hat{D}_{nj}(O_{j+1}) -E_{P_0}\left[\hat{D}_{nj}(O)|O_1,\ldots,O_{j}\right]\right) \label{eq:clt} \\
&+ \frac{1}{\sqrt{n-\ell_n}}\sum_{j=\ell_n}^{n-1} \hat{\sigma}_{nj}^{-1}\left[\Psi^{d_{nj}}(P_0)-\Psi_n(P_0) + \widehat{\Rem}_{nj}\right]. \nonumber
\end{align}
The second line converges to zero in probability by \ref{it:C3} and \ref{it:C4}. By \ref{it:C1}, \ref{it:C2}, and the martingale central limit theorem for triangular arrays in \cite{Gaenssleretal1978}, (\ref{eq:clt}) converges in distribution to a standard normal random variable. A standard Wald-type confidence interval construction argument shows that the confidence interval has coverage approaching $1-\alpha$ under \ref{it:C1} through \ref{it:C4}.

Now suppose \ref{it:C4} does not hold. By (\ref{eq:estproblem}), $\sum_{j=\ell_n}^{n-1} \hat{\sigma}_{nj}^{-1}\left[\Psi^{d_{nj}}(P_0)-\Psi_n(P_0)\right]\le 0$. The same argument readily shows the validity of the lower bound under only \ref{it:C1}, \ref{it:C2}, and \ref{it:C3}.
\end{proof}

\subsection{Second main result: efficiency when the maximizer in (\ref{eq:estproblem}) is unique}
We have presented a parametric-rate estimator for $\Psi_n(P_0)$, but thus far we have not made any claims about the efficiency of our estimator. In this section, we consider a fixed parameter in (\ref{eq:estproblem}) that does not rely on sample size. We therefore omit the $n$ subscript in many quantities to indicate their lack of dependence on sample size. We will give conditions under which our estimator is asymptotically efficient among all regular, asymptotically linear estimators. The efficiency bound is not typically well-defined when the maximizer is non-unique due to the non-regularity of the problem -- generally in this case no regular, asymptotically linear estimator exists, so neither does an efficient member of this class \citep{Hirano&Porter2012}. Thus the conditions that we give in this section will typically only hold when the maximizer $d_0\in\mathcal{D}$ in (\ref{eq:estproblem}) is unique.

We use the following additional assumptions for our efficiency result:
\begin{enumerate}[noitemsep,label=C\arabic*),start=5]
  \item $E_{P_0}\left[\left.\left(\hat{D}_j(O)-D^{d_0}(P_0)(O)\right)^2\right|O_1,\ldots,O_j\right]\rightarrow 0$ in probability as $j\rightarrow\infty$. \label{it:C5}
  \item There exists some $M<\infty$ such that $P_0\left(D^{d_0}(P_0)(O)<M\right)$ and $P_0\left(\hat{D}_j(O)<M\right)$ with probability approaching $1$ as $j\rightarrow\infty$. \label{it:C6}
  \item $\inf_{j\ge 1}\hat{\sigma}_j^2>\gamma$ with probability $1$ over draws of $(O_j : j=0,1,\ldots)$. \label{it:C7}
\end{enumerate}
We discuss the conditions immediately following the theorem.
\begin{theorem}[Asymptotic efficiency]
Suppose that $\Psi$ does not depend on sample size and is pathwise differentiable with canonical gradient $D^{d_0}(P_0)$. Further suppose that $\ell_n=o(n)$. If \ref{it:C1} through \ref{it:C7} hold, then
\begin{align*}
\bar{\sigma}_n^2\rightarrow \Var_{P_0}\left(D^{d_0}(P_0)(O)\right)\textnormal{ in probability as }n\rightarrow\infty.
\end{align*}
Furthermore,
\begin{align*}
\psi_n-\Psi(P_0) = \frac{1}{n}\sum_{i=1}^n D^{d_0}(P_0)(O_i) + o_{P_0}(n^{-1/2}).
\end{align*}
Thus $\psi_n$ is asymptotically efficient among all regular, asymptotically linear estimators.
\end{theorem}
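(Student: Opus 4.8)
The plan is to establish the two displayed assertions in turn; the efficiency conclusion then follows from the standard fact that an estimator which is asymptotically linear with influence function equal to the canonical gradient of a pathwise-differentiable parameter is both regular and asymptotically efficient among regular asymptotically linear estimators \citep{Bickel1993}. So the whole burden is the two displays, and I will write $\sigma_0^2\equiv\Var_{P_0}(D^{d_0}(P_0)(O))$, $\mathcal{F}_j\equiv\sigma(O_1,\ldots,O_j)$, and $g_j\equiv\hat D_j-D^{d_0}(P_0)$ throughout.

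For $\bar\sigma_n^2\to\sigma_0^2$, I would first show the true conditional variances $\sigma_j^2\equiv\Var_{P_0}(\hat D_j(O)\mid\mathcal{F}_j)$ converge to $\sigma_0^2$ in probability as $j\to\infty$: decomposing $\hat D_j=D^{d_0}(P_0)+g_j$ and using that $D^{d_0}(P_0)$ is mean zero with variance $\sigma_0^2$, condition~\ref{it:C5} and the Cauchy--Schwarz inequality give $E_{P_0}[\hat D_j(O)^2\mid\mathcal{F}_j]\to\sigma_0^2$ and $E_{P_0}[\hat D_j(O)\mid\mathcal{F}_j]\to 0$, hence $\sigma_j^2\to\sigma_0^2$ (and $\sigma_0^2>0$, since otherwise \ref{it:C2} and \ref{it:C7} are incompatible). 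Condition~\ref{it:C2} then transfers this to the empirical $\hat\sigma_j^2$, condition~\ref{it:C7} keeps $\hat\sigma_j^{-1}$, and hence $\bar\sigma_n$, uniformly bounded, and a Ces\`aro argument---legitimate because $\ell_n=o(n)$, with the uniform bound turning the per-index limit into an $L^1$ limit for the average---yields $\frac{1}{n-\ell_n}\sum_{j=\ell_n}^{n-1}\hat\sigma_j^{-1}\to\sigma_0^{-1}$ in probability. The continuous mapping theorem then gives $\bar\sigma_n\to\sigma_0$ and $\bar\sigma_n^2\to\sigma_0^2$.

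For the expansion I would restart from decomposition~(\ref{eq:clt}) in the proof of Theorem~\ref{thm:mainresult}. Its second line is $o_{P_0}(1)$ by \ref{it:C3} and \ref{it:C4}; since $\Psi$ does not depend on $n$ and its maximizer $d_0$ is unique, $\Psi_n(P_0)=\Psi(P_0)$; and since $\bar\sigma_n=O_{P_0}(1)$, this reduces the claim---after reindexing $i=j+1$ and absorbing the negligible $\frac1n\sum_{i=1}^{\ell_n}D^{d_0}(P_0)(O_i)$ and the $(n-\ell_n)/n$ rescaling into $o_{P_0}(n^{-1/2})$ (again using $\ell_n=o(n)$)---to showing that
\[
\frac{1}{\sqrt{n-\ell_n}}\sum_{j=\ell_n}^{n-1}\Big[\bar\sigma_n\hat\sigma_j^{-1}\big(\hat D_j(O_{j+1})-E_{P_0}[\hat D_j(O)\mid\mathcal{F}_j]\big)-D^{d_0}(P_0)(O_{j+1})\Big]=o_{P_0}(1).
\]
Using mean-zeroness of $D^{d_0}(P_0)$ (so $E_{P_0}[\hat D_j(O)\mid\mathcal{F}_j]=E_{P_0}[g_j(O)\mid\mathcal{F}_j]$), the bracket equals $(\bar\sigma_n\hat\sigma_j^{-1}-1)D^{d_0}(P_0)(O_{j+1})+\bar\sigma_n\hat\sigma_j^{-1}\big(g_j(O_{j+1})-E_{P_0}[g_j(O)\mid\mathcal{F}_j]\big)$. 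I would pull the non-adapted $O_{P_0}(1)$ factor $\bar\sigma_n$ out of the sums and split $\bar\sigma_n\hat\sigma_j^{-1}-1=\bar\sigma_n(\hat\sigma_j^{-1}-\sigma_0^{-1})+\sigma_0^{-1}(\bar\sigma_n-\sigma_0)$: the $\sigma_0^{-1}(\bar\sigma_n-\sigma_0)$ term leaves $o_{P_0}(1)$ times an $O_{P_0}(1)$ partial sum of $D^{d_0}(P_0)(O_i)$, while the remaining two terms become $\bar\sigma_n$ times sums of honest $\mathcal{F}_j$-martingale differences, since $\hat\sigma_j^{-1}$ and $g_j$ are $\mathcal{F}_j$-measurable whereas $D^{d_0}(P_0)(O_{j+1})$ has conditional mean zero. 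For each of these martingale sums I would bound the normalized sum of conditional second moments---by $\sigma_0^2\,\frac{1}{n-\ell_n}\sum_{j}(\hat\sigma_j^{-1}-\sigma_0^{-1})^2$ and by $\gamma^{-1}\,\frac{1}{n-\ell_n}\sum_{j}E_{P_0}[(\hat D_j(O)-D^{d_0}(P_0)(O))^2\mid\mathcal{F}_j]$ respectively---and show both are $o_{P_0}(1)$: the first by the first part of the proof and the second by \ref{it:C5}, in each case using \ref{it:C6}--\ref{it:C7} to supply the uniform control that legitimizes the Ces\`aro step. A conditional Chebyshev/$L^2$ argument then renders the martingale sums $o_{P_0}(1)$, completing the expansion.

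I expect the second part to be the main obstacle. Because $\bar\sigma_n$ depends on all of $O_1,\ldots,O_{n-1}$ and the weights $\hat\sigma_j^{-1}$ are random, one cannot simply invoke the martingale central limit theorem for the weighted correction sum; the decisive manoeuvre is the algebraic peeling that separates the limiting constant $\sigma_0^{-1}$ and the $O_{P_0}(1)$ multiplier $\bar\sigma_n$ from the genuinely adapted fluctuations $\hat\sigma_j^{-1}-\sigma_0^{-1}$ and $g_j$, so that every surviving term is either an $o_{P_0}(1)\times O_{P_0}(1)$ product or a martingale-difference sum whose conditional variance averages to zero. A secondary difficulty, present in both parts, is that \ref{it:C2} and \ref{it:C5} deliver convergence only in probability (or only on average), so the boundedness afforded by \ref{it:C6}--\ref{it:C7} must be invoked to convert these into statements about the relevant averages.
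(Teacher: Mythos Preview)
Your proposal is sound. The paper itself does not give a proof of this theorem; it simply states that ``the proof is entirely analogous to the proof of Corollary~3 in \cite{Luedtke&vanderLaan2015b} so is omitted.'' Your route---first showing $\sigma_j^2\to\sigma_0^2$ via \ref{it:C5} and Cauchy--Schwarz, then using \ref{it:C2} and \ref{it:C7} together with a Ces\`aro/Lemma~\ref{lem:seriesconv}--type averaging to obtain $\bar\sigma_n\to\sigma_0$; and for the linear expansion, peeling the non-adapted $\bar\sigma_n$ off the weights, splitting $\bar\sigma_n\hat\sigma_j^{-1}-1$ into an adapted fluctuation plus an $o_{P_0}(1)$ constant, and controlling the two resulting martingale-difference sums by showing their normalized predictable quadratic variations vanish---is exactly the natural argument and is presumably what the cited corollary does. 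The only point I would flag is to be explicit that the uniform bound from \ref{it:C6} (properly read as $|\hat D_j|\le M$ with probability tending to one) is what makes the passage from ``$E_{P_0}[g_j(O)^2\mid\mathcal{F}_j]\to 0$ in probability per $j$'' to ``$\frac{1}{n-\ell_n}\sum_j E_{P_0}[g_j(O)^2\mid\mathcal{F}_j]\to 0$ in probability, with bounded expectation'' legitimate; you note this, but it is the one place a referee might ask for a line of justification.
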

The proof is entirely analogous to the proof of Corollary 3 in \cite{Luedtke&vanderLaan2015b} so is omitted.

The additional conditions needed for this result over Theorem \ref{thm:mainresult} are mild when the maximizing index is unique. Condition \ref{it:C5} says that $\Psi$ should have the same canonical gradient as $\Psi^{d_0}$. While this should be manually checked in each example, it will be fairly typical when the maximizer is unique, since in this case an arbitrarily small fluctuation of $P_0$ will generally not change the maximizer. This is similar to problems in introductory calculus where the derivative at the maximum is zero. Condition \ref{it:C5} requires that $\hat{D}^{j}(O)$ converge to $D^{d_0}(P_0)(O)$ in mean-squared error, which is to be expected if $\hat{P}_{nj}$ begins to approximate $P_0$ and $d_{nj}$ converges to the unique maximizer $d_0$ as $n,j\rightarrow\infty$. Condition \ref{it:C6} is a bounding assumption on the canonical gradient and estimates thereof that will hold in many examples of interest. Finally, Condition \ref{it:C7} will hold if one knows that $\Var_{P_0}\left[D^d(P)(O)\right]$ is bounded away from zero uniformly in $P\in\mathcal{M}$ and $d\in\mathcal{D}$, and uses this knowledge to truncate $\hat{\sigma}_j^2$ at $\gamma_j>0$ for some deterministic sequence $\gamma_j\rightarrow 0$. For $\gamma_j$ sufficiently small and $j$ sufficiently large this truncation scheme will then have no effect on the variance estimates $\hat{\sigma}_j^2$. 

\subsection{Discussion of conditions of Theorem \ref{thm:mainresult}} \label{sec:discconds}
In this section, we again consider the setting where the parameter does not depend on sample size, and consequently omit the $n$ subscript to quantities which no longer depend on sample size. We will show that \ref{it:C7} and the following conditions imply the conditions of Theorem \ref{thm:mainresult}:
\begin{enumerate}[noitemsep,label=C\arabic*),start=9]
  \item $\hat{\sigma}_j^2 - \sigma_{j}^2$ converges to zero in probability as $j\rightarrow\infty$. \label{it:C9}
  \item $\sqrt{j}\widehat{\Rem}_{j}\equiv \sqrt{j}\Rem^{d_{j}}(\hat{P}_{j})$ converges to zero in probability as $j\rightarrow\infty$. \label{it:C10}
\end{enumerate}
The validity of the upper bound requires the following additional condition:
\begin{enumerate}[noitemsep,label=C\arabic*),start=11]
  \item $\sqrt{j}\left[\Psi^{d_{j}}(P_0)-\Psi(P_0)\right]$ converges to zero in probability as $j\rightarrow\infty$. \label{it:C11}
\end{enumerate}
For simplicity, we will take $\ell_n=0$ in this section.

We now discuss the conditions. Condition \ref{it:C1} is an immediate consequence of \ref{it:C7} and $D^d(P)(o)$ being uniformly bounded in $P\in\mathcal{M}$, $d\in\mathcal{D}$, $o\in\mathcal{O}$. This will be plausible in many situations, including the examples in this paper. A more general Lindeberg-type condition also suffices \citep[see Condition C1 in][]{Luedtke&vanderLaan2015b}, though we omit its presentation here for brevity.

The other three conditions all rely on terms like $\frac{1}{n}\sum_{j=0}^{n-1} R_j$ converging to zero in probability, possibly at some rate. Ideally we want a stochastic version of the fact that, for $\beta\in[0,1)$,
\begin{align}
\frac{1}{n}\sum_{j=1}^n j^{-\beta}&\approx \frac{1}{n}\int_1^n j^{-\beta}dj\approx \frac{n^{-\beta}}{1-\beta}\,\mbox{ when $n$ is large.} \label{eq:sumorder}
\end{align}
Lemma 6 of \cite{Luedtke&vanderLaan2015b} establishes this result. We restate it here for convenience.
\begin{lemma}[Lemma 6 in \citealp{Luedtke&vanderLaan2015b}] \label{lem:seriesconv}
Suppose that $R_j$ is some sequence of (finite) real-valued random variables such that $R_j=o_{P_0}(j^{-\beta})$ for some $\beta\in[0,1)$, where we assume that each $R_j$ is a function of $\{O_i : 1\le i\le j\}$. Then,
\begin{align*}
\frac{1}{n}\sum_{j=0}^{n-1} R_j&= o_{P_0}\left(n^{-\beta}\right).
\end{align*}
\end{lemma}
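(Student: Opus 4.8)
The plan is to strip off the rate and recognize the target as a weighted average of a null sequence. Put $\epsilon_j \equiv j^{\beta} R_j$, so that $\epsilon_j = o_{P_0}(1)$ by hypothesis. The $j = 0$ summand is finite and contributes $o_{P_0}(n^{-\beta})$ trivially, so by the triangle inequality it suffices to control
\begin{align*}
n^{\beta}\left|\frac{1}{n}\sum_{j=1}^{n-1} R_j\right| \;\le\; \frac{1}{n^{1-\beta}}\sum_{j=1}^{n-1} j^{-\beta}|\epsilon_j| \;=\; \frac{S_n}{n^{1-\beta}}\sum_{j=1}^{n-1} w_{nj}\,|\epsilon_j|,
\end{align*}
where $S_n \equiv \sum_{i=1}^{n-1} i^{-\beta}$ and $w_{nj} \equiv j^{-\beta}/S_n$. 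By (\ref{eq:sumorder}), $S_n / n^{1-\beta} \to 1/(1-\beta)$, a bounded factor; and for each $n$ the $w_{nj}$ are nonnegative, sum to one, and have vanishing maximum $\max_{1\le j\le n-1} w_{nj} = w_{n1} = 1/S_n \to 0$. Hence it is enough to show that the weighted average $A_n \equiv \sum_{j=1}^{n-1} w_{nj}|\epsilon_j|$ tends to zero in probability.

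Next I would fix $\eta > 0$ and split $|\epsilon_j| = |\epsilon_j|\mathbf{1}(|\epsilon_j| \le \eta) + |\epsilon_j|\mathbf{1}(|\epsilon_j| > \eta)$. The truncated-below part contributes at most $\eta \sum_j w_{nj} = \eta$ to $A_n$, uniformly in $n$. For the tail part I would pass to expectations: with $b_j \equiv E_{P_0}\!\left[|\epsilon_j|\mathbf{1}(|\epsilon_j| > \eta)\right]$, linearity gives $E_{P_0}\!\left[\sum_j w_{nj}|\epsilon_j|\mathbf{1}(|\epsilon_j| > \eta)\right] = \sum_j w_{nj} b_j$. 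Since $\epsilon_j \to 0$ in probability and the family $\{\epsilon_j\}$ is uniformly integrable in the cases of interest — indeed $j^{\beta} R_j$ is uniformly bounded in our applications, being assembled from bounded canonical gradients and second-order remainders — we have $\epsilon_j \to 0$ in $L^1$, so $b_j \to 0$ as $j \to \infty$. A standard Toeplitz (regular summability) argument, using $\sum_j w_{nj} = 1$ and $\max_j w_{nj} \to 0$, then gives $\sum_j w_{nj} b_j \to 0$, and Markov's inequality upgrades this to $\sum_j w_{nj}|\epsilon_j|\mathbf{1}(|\epsilon_j| > \eta) = o_{P_0}(1)$.

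Combining the two pieces, for every $\eta > 0$ we get $\limsup_n A_n \le \eta$ in probability; letting $\eta \downarrow 0$ shows $A_n \to 0$ in probability, and multiplying back by the bounded factor $S_n / n^{1-\beta}$ yields $n^{\beta}\,\frac{1}{n}\sum_{j=0}^{n-1} R_j = o_{P_0}(1)$, i.e.\ $\frac{1}{n}\sum_{j=0}^{n-1} R_j = o_{P_0}(n^{-\beta})$.

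I expect the tail step to be the main obstacle: convergence of $\epsilon_j$ to zero in probability does not by itself bound the size of its rare large excursions, so the argument must lean on the extra structure available in the applications — in particular the uniform boundedness (or at least uniform integrability) of the gradients and remainders that enter $R_j$ — in order to force $b_j \to 0$ strongly enough for the Toeplitz averaging to bite. The remaining ingredients, namely the deterministic estimate (\ref{eq:sumorder}) and the regular-summability bookkeeping, are routine.
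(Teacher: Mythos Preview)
The paper does not supply a proof of this lemma; it merely restates Lemma~6 of \cite{Luedtke&vanderLaan2015b} for convenience, so there is no in-paper argument to compare against. That said, your proposal deserves comment on its own merits.

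Your decomposition into a bounded Toeplitz factor times a weighted average $A_n=\sum_j w_{nj}|\epsilon_j|$ is the natural one, and the truncated part $\sum_j w_{nj}|\epsilon_j|\mathbf{1}(|\epsilon_j|\le\eta)\le\eta$ is fine. The difficulty you flag in the tail step is not merely an obstacle to your particular route---it reflects that the lemma, exactly as stated here with only the hypothesis $R_j=o_{P_0}(j^{-\beta})$, is false. Take $\beta=0$, let $O_1,O_2,\ldots$ be i.i.d.\ uniform on $[0,1]$, and set $R_j\equiv j\,\mathbf{1}(O_j\le 1/j)$. Each $R_j$ is a function of $O_1,\ldots,O_j$, and $P_0(|R_j|>\epsilon)=1/j\to 0$, so $R_j=o_{P_0}(1)$. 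Yet for every $n$ the event $\{\exists\, j\in[n/2,n-1]:O_j\le 1/j\}$ has probability bounded below by a positive constant (approximately $1-e^{-\log 2}=1/2$), and on that event $\tfrac{1}{n}\sum_{j=1}^{n-1}R_j\ge\tfrac{1}{n}\cdot\tfrac{n}{2}=\tfrac{1}{2}$. Hence $\tfrac{1}{n}\sum_j R_j\not\to 0$ in probability.

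So your instinct is correct: some extra integrability is required, and the uniform boundedness you invoke---which indeed holds for the gradients and remainders used in the paper's applications of the lemma (Conditions~\ref{it:C2}--\ref{it:C4} via \ref{it:C7}, \ref{it:C9}--\ref{it:C11})---is exactly what closes the gap. With $\sup_j\|\epsilon_j\|_\infty<\infty$ your Markov/Toeplitz step goes through cleanly, since then $b_j\le M\,P_0(|\epsilon_j|>\eta)\to 0$. You should state that boundedness (or uniform integrability) as an explicit hypothesis rather than deferring it to ``the cases of interest.''
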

Conditions \ref{it:C2} through \ref{it:C4} are now easily handled. Condition \ref{it:C2} is a consequence of the fact that
\begin{align*}
\frac{1}{n}\sum_{j=0}^{n-1}\left|\frac{\sigma_{j}^2}{\hat{\sigma}_j^2}-1\right|&\le \gamma^{-1}\frac{1}{n}\sum_{j=0}^{n-1}\left|\hat{\sigma}_j^2-\sigma_{j}^2\right|\rightarrow 0\textnormal{ in probability as }n\rightarrow\infty,
\end{align*}
where the inequality holds by \ref{it:C7} and the convergence holds by \ref{it:C9} Lemma \ref{lem:seriesconv}. Condition \ref{it:C9} is easily shown to hold under Glivenko-Cantelli conditions on the estimators $\hat{P}_j$ and $d_j$ \citep[see, e.g., Theorem 7 in][]{Luedtke&vanderLaan2015b}. Conditions \ref{it:C3} and \ref{it:C4} are an immediate consequence of \ref{it:C10} and \ref{it:C11} combined with Lemma \ref{lem:seriesconv}.

While sufficient conditions for \ref{it:C11} should be developed in each individual example, we can give intuition as to why this condition should be reasonable.
For any $P\in\mathcal{M}$, let $d(P)$ return a maximizer of (\ref{eq:estproblem}). We are interested in ensuring that $\Psi^{d_n}(P_0)-\Psi^{d(P_0)}(P_0)$ is small, where $d_n$ is our estimate of a maximizer of (\ref{eq:estproblem}). This can be expected to hold when the parameter $P\mapsto \Psi^{d(P)}(P_0)$ has pathwise derivative zero at $P=P_0$, where the $P_0$ in the $\Psi$ argument is fixed. When well-defined, the pathwise derivative will be zero because $d(P)$ is chosen to maximize $\Psi^d(P_0)$ in $d$.

\subsection{Computationally efficient implementation}
There are several computationally efficient ways to compute our estimator. In Section 6.1 of \cite{Luedtke&vanderLaan2015b}, we show that the runtime of our estimator can be dramatically improved by running the algorithm used to compute each $\hat{P}_j$ a limited number of times, say ten times. We do not detail this approach here, though we note that the theorems we have presented are general enough to apply to this case.

An alternative approach to improve runtime is to use the estimator's online nature to compute it efficiently both in time and storage. Suppose that we have an algorithm to update the estimate $\hat{P}_{nj}$ of $P_0$ to the estimate $\hat{P}_{n(j+1)}$ based on the first $j$ observations by looking at $O_{j+1}$ only. This will often be feasible if the parameter of interest and the bias correction step only require estimates of certain components of $P_0$, e.g. of a set of regression and classification functions. In these cases we can apply modern regression and classification approaches to estimate these quantities \citep[see, e.g.,][]{Xu2011,Lutsetal2013}. Often $d_{nj}$ can also be obtained using online methods, and thus $\frac{1}{n-\ell_n}\sum_{j=\ell_n}^{n-1}\left[\Psi^{d_{nj}}(\hat{P}_{nj})+\hat{D}_{nj}(O_{j+1})\right]$ can be estimated online by keeping a running sum. This quantity is not equal to $\psi_n$ because it does not yet include the weights.

It will not in general be possible to compute the weights online, though their computation does not require storing $O(n)$ observations in memory. We can estimate $\Var_{P_0}(\hat{D}_{nj}(O))$ consistently using the $r_j$ observations, where $r_j\rightarrow\infty$ but can grow very slowly (even $\log j$ suffices asymptotically, though such a slow growth is not recommended for finite samples). Given online estimates of these variances, it is then straightforward to compute both $\bar{\sigma}_n$ and the weights and incorporate these into our estimator. In some cases, we can compute the weights, and thus the estimator, in a truly online fashion. Describing general sufficient conditions for this appears to be difficult, but we conjecture that often this will not typically hold if $\mathcal{D}_n$ is not of finite cardinality. The weights can be computed online in the maximal correlation example that we describe in the next section.

\section{Maximal correlation example}
\subsection{Problem formulation}
We now present the running example of this work, namely the maximal correlation estimation problem considered by \cite{McKeague&Qian2015}. The observed data structure is $O=(X,Y)$, where $X=(X_k : k=1,\ldots)$ is a $[-1,1]^\infty$ vector of predictors and $Y$ is an outcome in $[-1,1]$. For each $n$, we let $\mathcal{K}_n$ represent a subset of these predictors of size $p$, where throughout we assume that
\begin{align}
\beta_n^2&\equiv \frac{\log p}{\sqrt{n}}\rightarrow 0\textnormal{ as }n\rightarrow\infty. \label{eq:pnrate}
\end{align}
For readability, we omit the dependence of $p$ on $n$ in the notation. Under a distribution $P$, the maximal absolute correlation of a predictor with $Y$ is given by
\begin{align}
\Psi_n(P)&\equiv \max_{k\in\mathcal{K}_n}\left|\Corr_P(X_k,Y)\right|, \label{eq:psikdef}
\end{align}
where $\Corr_P(X_k,Y)$ is the correlation of $X_k$ and $Y$ under $P$. We wish to develop confidence intervals for $\Psi_n(P_0)$. When a test of $H_0 : \Psi_n(P_0)=0$ against the complementary alternative, we also wish to establish the behavior of our test against local alternatives as was done in \cite{McKeague&Qian2015}.

In contrast to \cite{McKeague&Qian2015}, the procedure that we present in this work:
\begin{enumerate}[noitemsep,label=\arabic*)]
  \item is proven to work when $p$ grows with sample size at any rate satisfying (\ref{eq:pnrate});
  \item yields confidence intervals for the maximal correlation rather than just a test of the null hypothesis that it is equal to zero,
  \item allows a non-null the maximizer in (\ref{eq:psikdef}) to be non-unique; \label{it:uniquemax}  
  \item is proven to work in a nonparametric model that neither assumes linearity nor homoscedasticity.
\end{enumerate}
While \citeauthor{McKeague&Qian2015} argued that \ref{it:uniquemax} is unlikely in practice, having two non-null maximizers be approximately equal may still have finite sample implications for their test in some settings. 

We now show that this problem fits in our framework. To satisfy the pathwise differentiability condition, we let $\mathcal{D}_n\equiv\mathcal{K}_n\times \{-1,1\}$ and, for each $d=(k,m)\in\mathcal{D}_n$,
\begin{align*}
\Psi^d(P)&\equiv m \Corr_P(X_k,Y).
\end{align*}
Note that $\Psi_n(P)$ now takes the form in (\ref{eq:estproblem}), where we note that the use of $m$ in the definition of $\Psi^d$ serves to ensure that $\Psi_n(P_0)$ represents the correlation with the maximal \textit{absolute value}.

\subsection{Differentiability condition} \label{sec:corrdiff}
\subsubsection*{Canonical gradients}
For each $k$, let $s_P^2(X_k)\equiv \Var_P(X_k)$, and likewise for $s_P^2(Y)$. For ease of notation we let $s_0^2(X_k)\equiv s_{P_0}^2(X_k)$, and likewise for $s_0^2(Y)$ and $\Corr_0(X_k,Y)$. An application of the delta method shows that $\Psi^d$ has canonical gradient $D^d(P)(o)$ given by
\begin{align*}
m\times\left(\frac{\left(x_k-E_P[X_k]\right)\left(y-E_P[Y]\right)}{s_P(X_k) s_P(Y)} - \frac{1}{2}\Corr_P(X_k,Y)\left[\frac{\left(x_k-E_P[X_k]\right)^2}{s_P^2(X_k)} + \frac{\left(y-E_P[Y]\right)^2}{s_P^2(Y)}\right]\right).
\end{align*}
In order to ensure that $D^d(P_0)$ is uniformly bounded for all $d$, we assume throughout that, for some $\delta\in(0,1]$,
\begin{align*}
\min\{s_0(Y),s_0(X_1),s_0(X_2),\ldots\}>\delta.
\end{align*}
\subsubsection*{Second-order remainder}
Fix $d=(k,m)\in\mathcal{D}_n$ and $P\in\mathcal{M}$. Let $\tilde{\delta}\in(0,1]$ be some constant such that both $s_P(X_k)$ and $s_P(Y)$ are larger than $\tilde{\delta}$. Lemma \ref{lem:rembd} in the appendix proves that
\begin{align}
\left|\Rem^d(P)\right|\le \tilde{\delta}^{-1}\Bigg(&\left|s_P(X_k)s_P(Y)-s_0(X_k)s_0(Y)\right|\left|\Corr_P(X_k,Y)-\Corr_0(X_k,Y)\right| \nonumber \\
&+ \left(E_P[X_k]-E_{P_0}[X_k]\right)^2 + \left(E_P[Y]-E_{P_0}[Y]\right)^2 \nonumber \\
&+ \frac{s_0^2(Y)}{s_P^2(Y)}\left[s_P(X_k)-s_0(X_k)\right]^2 + \frac{s_0^2(X_k)}{s_P^2(X_k)}\left[s_P(Y)-s_0(Y)\right]^2\Bigg). \label{eq:rembd}
\end{align}
The first term above is small if $s_P(X_k)$, $s_P(Y)$, and $\Corr_P(X_k,Y)$ are close to $s_0(X_k)$, $s_0(Y)$, and $\Corr_0(X_k,Y)$. The middle terms are small if $E_P[X_k]$ and $E_P[Y]$ are close to $E_{P_0}[X_k]$ and $E_{P_0}[Y]$. The final terms are small if $s_P(X_k)$ and $s_P(Y)$ are close to $s_0(X_k)$ and $s_0(Y)$. 
\subsubsection*{Variance of canonical gradients}
There is no elegant (and informative) expression for the variance $\sigma_{nj}^2$ of $D^{d_{nj}}(P_j)(O)$. Nonetheless, we show in Lemma \ref{lem:spbds} of the appendix that our estimates $\hat{\sigma}_{nj}^2$, taken as the sample variance of $D^{d_{nj}}(P_j)(O)$, concentrate tightly about $\sigma_{nj}^2$ with high probability when the sample size is large enough. Thus, in practice, one can actually check if $\sigma_{nj}^2$ is small by looking at $\hat{\sigma}_{nj}^2$.
If $P_0$ is normal, then this variance is equal to $\left[1-\Corr_{P_0}(X_k,Y)^2\right]^2$, and so is only zero if $\Corr_{P_0}(X_k,Y)=1$. Though such an elegant expression does not exist for the variance of $D^d(P_0)(O)$ for general distributions, one can still show in general that the variance of $D^d(P_0)$ is equal to zero only if $\Corr_{P_0}(X_k,Y)=1$.
Here we make the slightly stronger assumption that
\begin{align}
\inf_{n\ge 2} \min_{(k,m)\in\mathcal{K}_n\times\{-1,1\}} \Var_{P_0}\left(D^d(P_0)(O)\right) \ge\gamma>0. \label{eq:gamma}
\end{align}

\subsection{Our estimator}
We will use the estimator presented in Section \ref{sec:probest} to estimate $\Psi_n(P_0)$. At each index $j\ge\ell_n$ we use the empirical distribution $P_j$ of the observations $O_1,\ldots,O_j$ to estimate $P_0$. We estimate $\sigma_{nj}^2$ with the variance of $\hat{D}_{nj}(O)$ under $P_j$.

In the appendix, we detail conditions on $\ell_n$ which ensure that $\ell_n$ does not grow too slowly or quickly. For any $\epsilon\in(0,2)$, one possible choice of $\ell_n$ that satisfies these conditions is
\begin{align}
\ell_n&= \max\left\{(\log\max\{n,p\})^{1+\epsilon},n\exp(-\beta_n^{-2+\epsilon})\right\}. \label{eq:ellnfastenough}
\end{align}
We show that this choice of $\ell_n$ ensures \ref{it:C1}, \ref{it:C2}, and \ref{it:C3} in the appendix. By Theorem \ref{thm:mainresult} this establishes the validity of the lower bound of our confidence interval. 
We can also show that this lower bound is tight up to a term of the order $n^{-1/4}\beta_n$.
\begin{theorem}[Tightness of the lower bound] \label{thm:mcklb}
For any sequence $t_n\rightarrow\infty$, $\Psi_n(P_0)< \LBn + t_nn^{-1/4}\beta_n$ with probability approaching $1$.
\end{theorem}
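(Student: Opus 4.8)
The plan is to show that the gap $\Psi_n(P_0) - \LBn$ is $o_{P_0}(n^{-1/4}\beta_n \, t_n)$ for every $t_n \to \infty$, which by the form $\LBn = \psi_n - 1.96\bar\sigma_n/\sqrt{n-\ell_n}$ amounts to controlling three pieces separately: the centered martingale term in \eqref{eq:clt}, the "optimality defect" term $\frac{1}{\sqrt{n-\ell_n}}\sum_j \hat\sigma_{nj}^{-1}[\Psi^{d_{nj}}(P_0) - \Psi_n(P_0)]$, and the remainder term $\frac{1}{\sqrt{n-\ell_n}}\sum_j \hat\sigma_{nj}^{-1}\widehat{\Rem}_{nj}$, together with the $\bar\sigma_n/\sqrt{n-\ell_n}$ half-width itself. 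First I would note that $\bar\sigma_n$ is bounded (the canonical gradients are uniformly bounded under the stated moment conditions and $\hat\sigma_{nj}^2 \geq \gamma$ by \eqref{eq:gamma} and the truncation scheme), so the half-width contributes $O_{P_0}(n^{-1/2}) = o_{P_0}(n^{-1/4}\beta_n t_n)$ since $\beta_n^2 = \log p/\sqrt n$ and $\log p \geq \log 2$. Likewise the martingale term in \eqref{eq:clt} is $O_{P_0}(1)$ by the martingale CLT already invoked for Theorem \ref{thm:mainresult}, so after dividing by $\sqrt{n-\ell_n}$ it too is $O_{P_0}(n^{-1/2})$, which is negligible at the claimed scale.

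The substantive work is in bounding the two deterministic-looking sums from below. For the remainder term, I would feed the bound \eqref{eq:rembd} into a concentration argument: with $\hat P_{nj}$ the empirical distribution of $j$ observations, each of the four groups of terms in \eqref{eq:rembd} is a squared difference (or product of differences) of empirical means of bounded functions from their population counterparts, maximized over $k \in \mathcal{K}_n$. A union bound over the $p$ coordinates together with a Bernstein/Hoeffding tail gives that $\max_{k} |\widehat{\Rem}_{nj}|$ is $O_{P_0}(\log p / j)$ uniformly, so $\sqrt j \, \widehat{\Rem}_{nj} = O_{P_0}(\beta_n^2 \sqrt{n}/\sqrt j \cdot 1/\sqrt n \cdot \sqrt j) $— more carefully, $\widehat{\Rem}_{nj} = O_{P_0}(\beta_n^2 n^{-1/2} \cdot (n/j))$, and applying Lemma \ref{lem:seriesconv}-type reasoning (summing $j^{-1}\log p$ over $j$ from $\ell_n$ to $n-1$ gives $O((\log p)(\log n)/n)$, then multiplying by $\sqrt{n-\ell_n}$) yields that the remainder contribution to $\psi_n - \Psi_n(P_0)$ is $O_{P_0}(n^{-1/2}(\log p)(\log n))$, which one checks is $o(n^{-1/4}\beta_n)$ given \eqref{eq:pnrate} and the lower bound on $\ell_n$ in \eqref{eq:ellnfastenough}. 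The key point enabled by the choice of $\ell_n$ is that the early, badly-behaved indices $j < \ell_n$ are excluded, so the sums start late enough that the $\log p$ factors do not overwhelm the rate.

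For the optimality-defect term, which is the only term that can make $\LBn$ strictly below $\Psi_n(P_0)$ asymptotically (it is $\leq 0$ by \eqref{eq:estproblem}), I would argue that $\Psi^{d_{nj}}(P_0) - \Psi_n(P_0)$ is $O_{P_0}(\beta_n^2 / \sqrt j)$ in absolute value uniformly in $j \geq \ell_n$. This is where the one-sided-differentiability intuition from Section \ref{sec:discconds} becomes quantitative: $d_{nj}$ is chosen to maximize $\Psi^d(\hat P_{nj})$, and the envelope/maximum $\max_k |\widehat{\Corr}_{nj}(X_k,Y) - \Corr_0(X_k,Y)|$ concentrates at rate $\sqrt{\log p / j} = \beta_n \cdot (n/j)^{1/4} \cdot j^{-1/4}$ — again via a union bound over $p$ coordinates plus a sub-Gaussian tail for the empirical correlation (using boundedness of $X_k, Y$ and the variance lower bound $\delta$). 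Since the estimated maximizer's true value can differ from the true maximum by at most twice this uniform estimation error, $|\Psi^{d_{nj}}(P_0) - \Psi_n(P_0)| = O_{P_0}(\sqrt{\log p/j})$. Summing $\hat\sigma_{nj}^{-1}\sqrt{\log p/j}$ over $j = \ell_n, \ldots, n-1$ (the inverse standard deviations are bounded by $\gamma^{-1/2}$) gives $O(\sqrt{\log p}\cdot\sqrt{n})$, so dividing by $\sqrt{n-\ell_n}$ produces a term of order $\sqrt{\log p} = (\log p / \sqrt n)^{1/2} n^{1/4} = \beta_n n^{1/4}$, i.e. exactly $O_{P_0}(n^{1/4}\beta_n)$ — hence $o_{P_0}(n^{1/4}\beta_n t_n)$, giving the claimed $n^{-1/4}\beta_n$ when expressed as a contribution to $\psi_n - \Psi_n(P_0)$ after the $1/\sqrt{n-\ell_n}$ normalization...

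Actually let me restate the bookkeeping cleanly: collecting the three contributions, $\Psi_n(P_0) - \LBn = [\Psi_n(P_0) - \psi_n] + 1.96\bar\sigma_n/\sqrt{n-\ell_n}$, and the dominant stochastic term in $\psi_n - \Psi_n(P_0)$ after the analysis above is the optimality defect, of size $O_{P_0}(n^{-1/4}\beta_n)$; all other pieces are $o_{P_0}(n^{-1/4}\beta_n)$. Therefore $\Psi_n(P_0) - \LBn = O_{P_0}(n^{-1/4}\beta_n)$, so for any $t_n \to \infty$ we have $\Pr(\Psi_n(P_0) < \LBn + t_n n^{-1/4}\beta_n) \to 1$. \textbf{The main obstacle} I anticipate is making the uniform-in-$j$ concentration bounds for the correlation estimators rigorous with the correct $\log p$ dependence — in particular handling the ratio structure of the empirical correlation (numerator covariance over product of empirical standard deviations) so that the variance lower bound $\delta$ and the truncation at $\gamma$ actually deliver sub-Gaussian, not merely sub-exponential, tails, and then combining the per-$j$ bounds into a single statement that holds simultaneously over all $j \in \{\ell_n,\ldots,n-1\}$ without losing an extra $\log n$ that would break the rate. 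I expect this to require the auxiliary concentration lemmas (Lemma \ref{lem:spbds} and the remainder bound Lemma \ref{lem:rembd}) stated as being in the appendix, and a careful choice of the Bernstein parameters tuned to the $\ell_n$ in \eqref{eq:ellnfastenough}.
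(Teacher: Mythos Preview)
Your proposal is essentially correct and follows the same route as the paper: decompose via \eqref{eq:clt}, show that the martingale term, the second-order remainder, and the half-width each contribute $o_{P_0}(n^{-1/4}\beta_n)$ to $\Psi_n(P_0)-\LBn$, and then show that the optimality-defect term $\frac{1}{n-\ell_n}\sum_j \hat\sigma_{nj}^{-1}[\Psi^{d_{nj}}(P_0)-\Psi_n(P_0)]$ is $O_{P_0}(n^{-1/4}\beta_n)$ using a uniform bound $|\Psi^{d_{nj}}(P_0)-\Psi_n(P_0)|\lesssim \sqrt{(\log p)/j}$ coming from concentration of the empirical correlations over all $k\in\mathcal{K}_n$. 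The paper packages the concentration step slightly differently---it defines a single high-probability event $\mathcal{A}_n$ (via McDiarmid applied to the empirical process over the finite class $\mathcal{F}_n$, then a union bound over $j=1,\ldots,n$, which is where the $\log\max\{n,p\}$ rather than $\log p$ arises) and works deterministically on that event---but the content is the same as your union-bound-plus-sub-Gaussian-tail sketch.

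One bookkeeping slip to fix: your remainder accounting ``$O_{P_0}(n^{-1/2}(\log p)(\log n))$, which one checks is $o(n^{-1/4}\beta_n)$'' mixes up normalizations. The contribution of the remainder to $\psi_n-\Psi_n(P_0)$ itself (not to $\sqrt{n-\ell_n}$ times it) is $O\!\big((\log p)\log(n/\ell_n)/n\big)$, and \emph{that} is $o(n^{-1/4}\beta_n)=o(\sqrt{\log p}/\sqrt{n})$ because $\sqrt{\log p}\,\log n=o(\sqrt{n})$ under \eqref{eq:pnrate}. The paper handles this step by invoking condition \eqref{eq:ellnlb2} on $\ell_n$ directly rather than the cruder $\log n$ bound, but either way the term is negligible at the target scale. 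Your anticipated ``main obstacle'' (the extra $\log n$ from the union over $j$) is real but harmless: it gets absorbed into $\log\max\{n,p\}$ and does not change the order $n^{-1/4}\beta_n$ of the dominant defect term.
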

We now consider the validity of the upper bound of our confidence interval, which holds under \ref{it:C4}. This condition is trivially valid if $\Psi_n(P_0)=0$ for all $n$. Condition \ref{it:C4} is also valid under the following margin condition:
\begin{enumerate}[label=MC)]
  \item \label{it:marginmain} For some sequence $t_n\rightarrow\infty$, there exists a sequence of non-empty subsets $\Knstar\subseteq\mathcal{K}_n$ such that, for all $n$,
\begin{align*}
&\sup_{k\in\Knstar} \left|\Corr_{P_0}(X_k,Y)\right| - \inf_{k\in\Knstar}\left|\Corr_{P_0}(X_k,Y)\right|= o(n^{-1/2}), \\
&\inf_{k\in \Knstar}\left|\Corr_{P_0}(X_{k},Y)\right|\ge \sup_{k\in \mathcal{K}_n\backslash \Knstar}\left|\Corr_{P_0}(X_{k},Y)\right| + t_n n^{-1/4}\beta_n.
\end{align*}
If $\Knstar=\mathcal{K}_n$, then the supremum over $\mathcal{K}_n\backslash\Knstar$ is taken to be zero.
\end{enumerate}
\begin{theorem}[Validity of the upper bound] \label{thm:mckub}
If \ref{it:marginmain} or $\Psi_n(P_0)=0$ for all $n$, then \ref{it:C4} holds so that $\LBn\le \Psi_n(P_0)\le \UBn$ with probability approaching $1-\alpha$.
\end{theorem}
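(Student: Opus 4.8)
The plan is to verify Condition \ref{it:C4}; the displayed conclusion then follows from Theorem \ref{thm:mainresult}, since \ref{it:C1}--\ref{it:C3} already hold under the choice (\ref{eq:ellnfastenough}). The null case is immediate: if $\Psi_n(P_0)=0$ for all $n$, then $\Corr_0(X_k,Y)=0$ for every $k\in\mathcal{K}_n$, so $\Psi^{d}(P_0)=0=\Psi_n(P_0)$ for all $d\in\mathcal{D}_n$; in particular $\Psi^{d_{nj}}(P_0)-\Psi_n(P_0)=0$ and the sum in \ref{it:C4} vanishes identically.

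Assume now \ref{it:marginmain}. Write $d_{nj}=(k_{nj},m_{nj})$, so that $k_{nj}$ maximizes $k\mapsto|\Corr_{\hat{P}_{nj}}(X_k,Y)|$ over $\mathcal{K}_n$ and $m_{nj}=\sgn\Corr_{\hat{P}_{nj}}(X_{k_{nj}},Y)$, and set $\hat{\Delta}_{nj}\equiv\max_{k\in\mathcal{K}_n}|\Corr_{\hat{P}_{nj}}(X_k,Y)-\Corr_0(X_k,Y)|$. Since $\Psi^{d_{nj}}(P_0)=m_{nj}\Corr_0(X_{k_{nj}},Y)\le|\Corr_0(X_{k_{nj}},Y)|\le\Psi_n(P_0)$, every summand in \ref{it:C4} is nonpositive, and on an event of probability tending to one $\hat{\sigma}_{nj}^{-1}$ is bounded uniformly in $j$ by a multiple of $\gamma^{-1/2}$ (using (\ref{eq:gamma}) and Lemma \ref{lem:spbds}, or simply a fixed truncation). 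It therefore suffices to prove
\begin{align*}
\frac{1}{\sqrt{n-\ell_n}}\sum_{j=\ell_n}^{n-1}E_{P_0}\left[\Psi_n(P_0)-\Psi^{d_{nj}}(P_0)\right]\longrightarrow 0,
\end{align*}
because the left-hand side of \ref{it:C4} is then, on that event, a nonpositive quantity whose expectation tends to zero, and hence is $o_{P_0}(1)$ by Markov's inequality.

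To bound the summands, split on the event $G_{nj}\equiv\{k_{nj}\in\Knstar\text{ and }m_{nj}=\sgn\Corr_0(X_{k_{nj}},Y)\}$. On $G_{nj}$ one has $\Psi^{d_{nj}}(P_0)=|\Corr_0(X_{k_{nj}},Y)|\ge\inf_{k\in\Knstar}|\Corr_0(X_k,Y)|$, while \ref{it:marginmain} forces the overall maximum to be attained within $\Knstar$, so $\Psi_n(P_0)=\sup_{k\in\Knstar}|\Corr_0(X_k,Y)|$; hence $0\le\Psi_n(P_0)-\Psi^{d_{nj}}(P_0)\le\sup_{k\in\Knstar}|\Corr_0(X_k,Y)|-\inf_{k\in\Knstar}|\Corr_0(X_k,Y)|=o(n^{-1/2})$ uniformly in $j$, and these terms contribute $O(\sqrt{n-\ell_n})\,o(n^{-1/2})=o(1)$ after normalization. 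On $G_{nj}^c$ I would use only $\Psi_n(P_0)-\Psi^{d_{nj}}(P_0)\le 2$ together with the deterministic observation — the crux of the argument — that the separation in \ref{it:marginmain} yields $G_{nj}^c\subseteq\{\hat{\Delta}_{nj}\ge\tfrac12 t_n n^{-1/4}\beta_n\}$: an index outside $\Knstar$ can overtake the true maximizer empirically, or the sign at a selected $\Knstar$-index can flip, only if some empirical correlation deviates from the truth by at least half the gap $t_n n^{-1/4}\beta_n$.

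It then remains to show $\sum_{j=\ell_n}^{n-1}\Prob(\hat{\Delta}_{nj}\ge\tfrac12 t_n n^{-1/4}\beta_n)=o(\sqrt{n})$. For this I would invoke exponential concentration of the uniform deviation $\hat{\Delta}_{nj}$: because the sample correlation is a smooth function of sample moments of $[-1,1]$-valued variables whose variances exceed $\delta^2$, a union bound over the $p$ predictors gives that this probability is at most of order $p\exp(-c\,j\,(t_n n^{-1/4}\beta_n)^2)=\exp(\log p-c'\,j\,t_n^2\log p/n)$ once $j$ is large enough that the gap dominates the $O(1/j)$ bias of the sample correlation; summing this essentially geometric tail over $j\ge\ell_n$, and treating the smaller values of $j$ near $\ell_n$ — where $\hat{\Delta}_{nj}$ has not yet resolved the gap — via the crude bound $\Psi_n(P_0)-\Psi^{d_{nj}}(P_0)\le 2\hat{\Delta}_{nj}$ together with $E_{P_0}[\hat{\Delta}_{nj}]=O(\sqrt{\log p/j})$, one obtains the claimed $o(\sqrt{n})$. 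This summation step is where the choice (\ref{eq:ellnfastenough}) of $\ell_n$ (not too small), the rate $\log p=o(\sqrt{n})$, and $t_n\to\infty$ all enter, and it is the part I expect to demand the most care. With \ref{it:C4} in hand, Theorem \ref{thm:mainresult} yields $\Prob(\LBn\le\Psi_n(P_0)\le\UBn)\to 1-\alpha$.
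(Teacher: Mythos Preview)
Your proposal is correct and follows the same skeleton as the paper: the null case is trivial, and under \ref{it:marginmain} you split according to whether the selected index lands in $\Knstar$ with the correct sign, bound by $\textnormal{Diam}(\Knstar)=o(n^{-1/2})$ on the good event, and use the crude inequality $\Psi_n(P_0)-\Psi^{d_{nj}}(P_0)\le 2\hat{\Delta}_{nj}$ for the remaining small-$j$ terms. The containment $G_{nj}^c\subseteq\{\hat{\Delta}_{nj}\ge\tfrac12 t_n n^{-1/4}\beta_n\}$ is exactly right and is the same observation driving the paper's argument.

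The execution differs. Rather than taking expectations term by term and invoking Markov plus termwise exponential concentration of each $\hat{\Delta}_{nj}$, the paper works throughout on a single event $\mathcal{A}_n$ (probability $\ge 1-1/n$) on which \emph{all} sample moments---and hence all $\hat{\Delta}_{nj}$---are simultaneously within $CK_{nj}=Cj^{-1/2}\sqrt{\log\max\{n,p\}}$ of their targets (Lemmas \ref{lem:highprob}--\ref{lem:concests}). On $\mathcal{A}_n$ the inequality $\Psi_n(P_0)-\Psi^{d_{nj}}(P_0)\le C\delta^{-1}K_{nj}$ holds \emph{deterministically} for every $j\ge\ell_n$, so the index $J_n$ beyond which $k_{nj}\in\Knstar$ with the right sign is explicit, and both the large-$j$ and small-$j$ sums become purely deterministic calculations. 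This buys two simplifications over your route: (i) no need to justify an exponential tail bound for the sample correlation (a ratio of moments), since the paper linearizes first via $\mathcal{F}_n$ and applies McDiarmid to the moments; and (ii) no summation of tail probabilities---the part you correctly flag as requiring ``the most care'' simply does not arise. Your approach is valid, but the uniform-event device is cleaner and is what the paper's Lemmas \ref{lem:highprob}--\ref{lem:eifvarlb} are set up to deliver.
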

We outline the techniques used to prove these two results at the end of this subsection. Complete proofs are given in the appendix.

Suppose we wish to test $H_0 : \Psi_n(P_0)=0$ against $H_1 : \Psi_n(P_0)>0$. Consider the test that rejects $H_0$ if $\LBn>0$. We wish to explore the behavior of this test under local alternatives where $\Psi_n(P_0)$ converges to zero slower than $n^{-1/4}\beta_n$. Theorem \ref{thm:mcklb} shows that this test has power converging to one under such local alternatives. Furthermore, as the lower bound is valid in general, this test has type I error of at most $\alpha/2$ under the null. This is indeed an exciting result as it enables the study of local alternatives even when dimension grows quickly with sample size. If dimension does not grow with sample size, this shows that we can detect against any alternatives converging to zero slower than $n^{-1/2}\sqrt{\log n}$. We would not be surprised if the $\sqrt{\log n}$ is unnecessary, but rather that it is simply a result of our proof techniques which give high probability bounds on the concentration of our correlation estimates at each sample size. \cite{McKeague&Qian2015} showed that their method is consistent against a class of alternatives converging to zero slower than $n^{-1/2}$ provided the optimal index is unique. Our result does not rely on this uniqueness condition.

Theorem \ref{thm:mckub} shows that the upper bound of our confidence interval is also valid under a reasonable margin condition. The margin condition states that there may be many non-null approximate maximizers provided their absolute correlations are well-separated from the absolute correlations of the other predictors with $Y$. By ``approximate'' we mean that their absolute correlations all fall within $o(n^{-1/2})$ of one another. If $\mathcal{K}_n$ does not depend on sample size, then this theorem shows that our two-sided confidence interval is always valid.

\begin{proof}[Sketch of proofs of Theorems \ref{thm:mcklb} and \ref{thm:mckub}]
Our proofs of both of these theorems rely on high-probability bounds of the absolute differences between our estimates of $s_{P_j}^2(X_k)$, $s_{P_j}^2(Y)$, $\Corr_{P_j}(X_k,Y)$, and $\hat{\sigma}_{nj}$ and their population counterparts, uniformly over $k\in\mathcal{K}_n$ and $j$. We show that, with probability at most $1-1/n$, all of these absolute differences are upper bounded by constants (with explicit dependence on $\gamma$ and $\delta$) times $j^{-1/2}\log\max\{n,p\}$.

Condition \ref{it:C1} follows once we show that, with high probability, $s_{P_j}^2(X_k)$ and $s_{P_j}^2(Y)$ are bounded below by $\delta/2$ and $\hat{\sigma}_{nj}^2$ is bounded below by $\gamma/2$ uniformly over $j\ge\ell_n$ for $n$ large enough. Condition \ref{it:C2} and \ref{it:C3} are easy consequences of our concentration results. The concentration results also yield that
\begin{align*}
\frac{1}{n-\ell_n}\sum_{j=\ell_n}^{n-1} \hat{\sigma}_{nj}^{-1}\left[\Psi^{d_{nj}}(P_0)-\Psi_n(P_0)\right]&= O_{P_0}\left(n^{-1/4}\beta_n\right),
\end{align*}
which then quickly yields Theorem \ref{thm:mcklb} thanks to the expression in (\ref{eq:clt}).

Now suppose \ref{it:marginmain} holds. By our concentration inequalities, we select a $k_{nj}\in\Knstar$ for each $j\ge Ct_n^{-1} n$ with high probability, where $C$ is a constant. We also correctly specify $m_{nj}$ to be the sign of $\Corr_0(X_{k_{nj}},Y)$. Because all of the absolute correlations in $\Knstar$ are small, the difference between $\Psi^{d_{nj}}(P_0)$ for $d_{nj}=(k_{nj},m_{nj})$ and $\Psi_n(P_0)$ is very small. If $\ell_n< Ct_n^{-1}$, then we can apply our concentration inequalities to establish that these first few values of $j$ for which $j<Ct_n^{-1}$ are small enough so that \ref{it:C4} still holds, yielding Theorem \ref{thm:mckub}.
\end{proof}
In Appendix \ref{app:compcomplex}, we show that our estimator runs in $O(np)$ time. We show that the estimator can be computed using $O(p)$ storage when the observations $O_1,\ldots,O_n$ arrive in a data stream. This result is closely related to the fact that, for a $\mathbb{R}^p$-valued sequence $\{t_i\}$, the sum $S_j\equiv \sum_{i=1}^j t_i$ at $j=n$ can be computed in time $O(np)$ using storage $O(p)$. In particular, one can use the recursion relation $S_j=t_j + S_{j-1}$, thereby only storing $t_j$ and $S_{j-1}$ when computing $S_j$. Our estimator can also be computed in $O(np)$ time and $O(n)$ storage when the vectors $(X_{jr} : j=1,\ldots,n)\in\mathbb{R}^n$ arrive in a stream for $r=1,2,\ldots,p$, where $X_{jr}$ is the observation of $X_r$ for individual $j$. We do not prove the $O(n)$ storage result in the appendix due to space constraints, though the algorithm is closely related to that given in Appendix \ref{app:compcomplex}.

\section{Simulation study}
We now consider the power and scalability of our method using the simulations similar to those described in \cite{McKeague&Qian2015}. Let $X\sim \textnormal{MVN}(0,\Sigma)$ for $\Sigma$ a $p\times p$ covariance matrix to be given shortly, and $\tau_1,\ldots,\tau_p$ be a sequence of i.i.d. nromal random variables independent of all other quantities under consideration. We will use two types of errors: the homoscedastic error $\tau_1$ and the heteroscedastic error $\eta(X)\equiv \sum_{k=1}^p X_k \tau_k/\sqrt{p}$. For $(n,p)=(200,200),(500,2\,000)$, we generate data using the following distributions: (N.IE) $Y=\tau_1$, (A1.IE) $Y=X_1/5 + \tau_1$, (A2.IE) $Y=0.15\sum_{k=1}^5 X_k -0.1\sum_{k=6}^{10} X_k + \tau_1$, (N.DE) $Y=\eta(X)$, (A1.DE) $Y=X_1/5 + \eta(X)$, and (A2.DE) $Y=0.15\sum_{k=1}^5 X_k -0.1\sum_{k=6}^{10} X_k + \eta(X)$. For $(n,p)=(2\,000,30\,000)$, we generate data using the following distributions: (N.IE) $Y=\tau_1$, (A3.IE) $Y = X_1/15 + \tau_1$, and (A4.IE) $Y =0.03\sum_{k=1}^5 X_k -0.015\sum_{k=6}^{10} X_k + \tau_1$. We set all of the diagonal elements in the covariance matrix $\Sigma$ equal $1$, and the off-diagonal elements equal $\rho$, where for each simulation setting we let $\rho=0,0.25,0.5,0.75$.

We conduct a 5\% test of $\Psi(P_0)>0$ by checking if the lower bound of a 90\% confidence interval for this quantity is greater than zero. We use models N.IE and N.DE to evaluate type I error and all other models evaluate power. We run our method with $\ell_n$ as in (\ref{eq:ellnfastenough}), where we let $\epsilon=0.5$. For ease of implementation, we compute our method on chunks of data of size $(n-\ell_n)/10$ (see Section 6.1 of \citealp{Luedtke&vanderLaan2015b}). We compare our method to the parametric bootstrap analogue of ART described in Section 2 of \cite{Zhang&Laber2015} for all $n$. The parametric bootstrap analogue of ART assumes a locally linear model with homoscedastic errors. We use $500$ bootstrap draws for each run of the parametric bootstrap procedure. \citeauthor{Zhang&Laber2015} show that their method, which does not involve running a computationally burdensome double bootstrap procedure, has comparable performance to ART across sample sizes and predictor dimension, while being more computationally efficient. For this reason, we do not directly compare against the ART results in \cite{McKeague&Qian2015} due to its heavy computational requirements. The parametric bootstrap analogue to ART is less computationally intensive than the ART, but still requires estimating the $p\times p$ covariance matrix $\Sigma$ and simulating from a $N(0,\hat{\Sigma})$ distribution. Due to computational constraints, we only run this parametric bootstrap analogue for $p\le 2\,000$ and not for $p=30\,000$. We also compare our method to a Bonferroni-corrected $t$-test.

All simulations are run using 1\,000 Monte Carlo simulations in \texttt{R} \citep{R2014}.

\begin{figure}[ht]
  \centering
  \includegraphics[width=\linewidth]{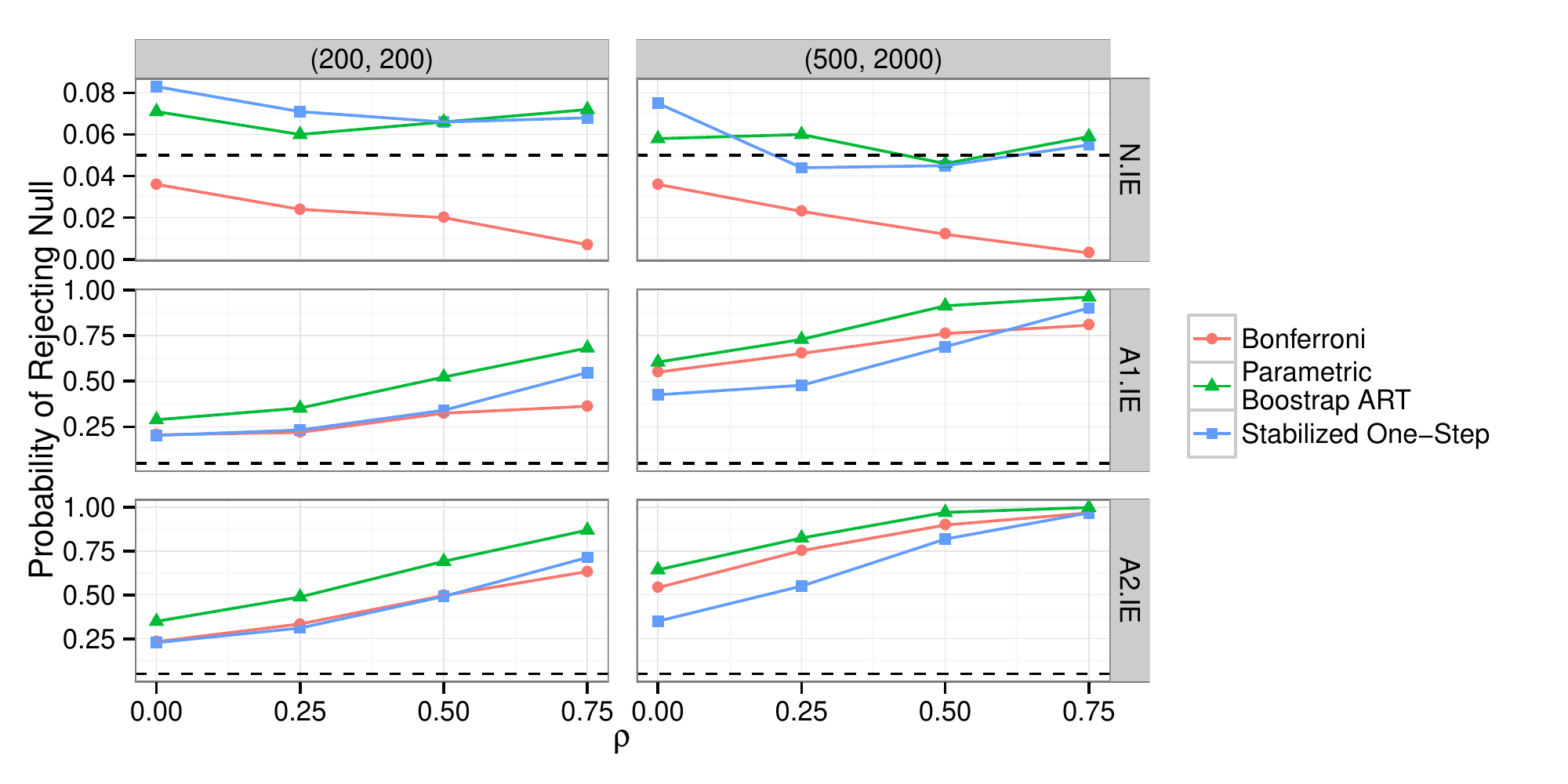}
  \caption{Power of the various testing procedures for $(n,p)$ equal to $(200,200)$ and $(500,2\,000)$ under homoscedastic errors. The parametric bootstrap analogue of ART performs the best in this setting.}
  \label{fig:smallIE}
\end{figure}

Figures \ref{fig:smallIE} displays the power of the three testing procedures for $(n,p)$ equal to $(200,200)$ and $(500,2\,000)$ for the homoscedastic data generating distributions N.IE, A1.IE, and A2.IE. The parametric bootstrap analogue of ART performs best in both of these settings. We can show (details omitted) that our method underperforms in this setting due to the second-order term representing the cost for estimating $d_0$ on subsets of the data of size $j\ll n$ early on in the procedure. While Theorem \ref{thm:cor2ndord} ensures that the estimate of $d_0$ will be asymptotically valid, there appears to be a noticeable price to pay at small sample sizes.

\begin{figure}[ht]
  \centering
  \includegraphics[width=\linewidth]{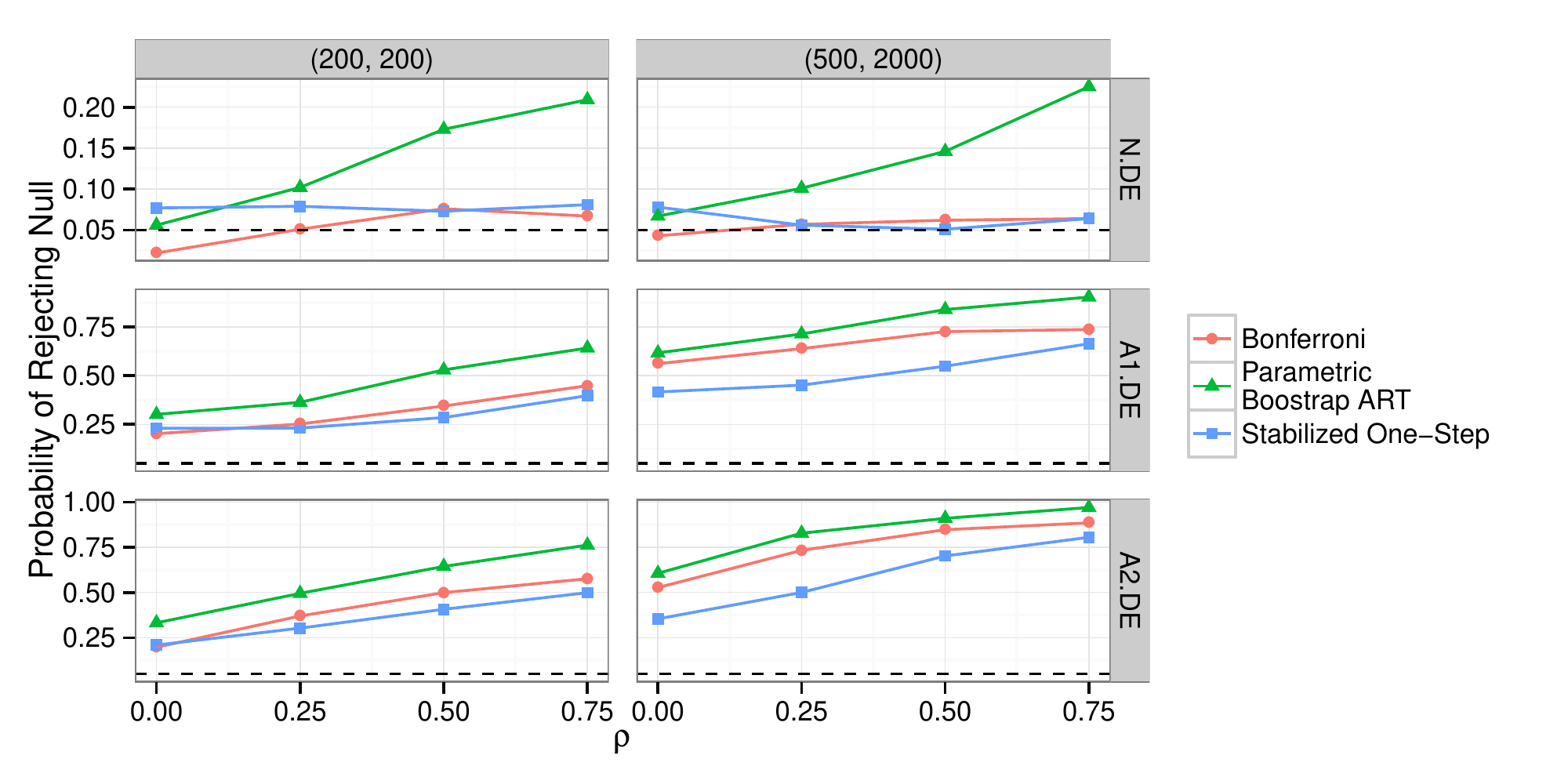}
  \caption{Power of the various testing procedures for $(n,p)$ equal to $(200,200)$ and $(500,2\,000)$ under heteroscedastic errors. The parametric bootstrap analogue of ART fails to control the type I error in this setting.}
  \label{fig:smallDE}
\end{figure}

Figures \ref{fig:smallDE} displays the power of the three testing procedures for $(n,p)$ equal to $(200,200)$ and $(500,2\,000)$ for the heteroscedastic data generating distributions. The parametric bootstrap analogue of ART fails to control the type I error in this setting. This is unsurprising given that this test was developed under a local linear model with independent errors. Both our method and Bonferroni adequately control type I error in this setting, especially at the larger sample size $n=500$, while we see that the Bonferroni procedure achieves slightly better power than our method for these data generating distributions.

\begin{figure}[ht]
  \centering
  \includegraphics[width=0.75\linewidth]{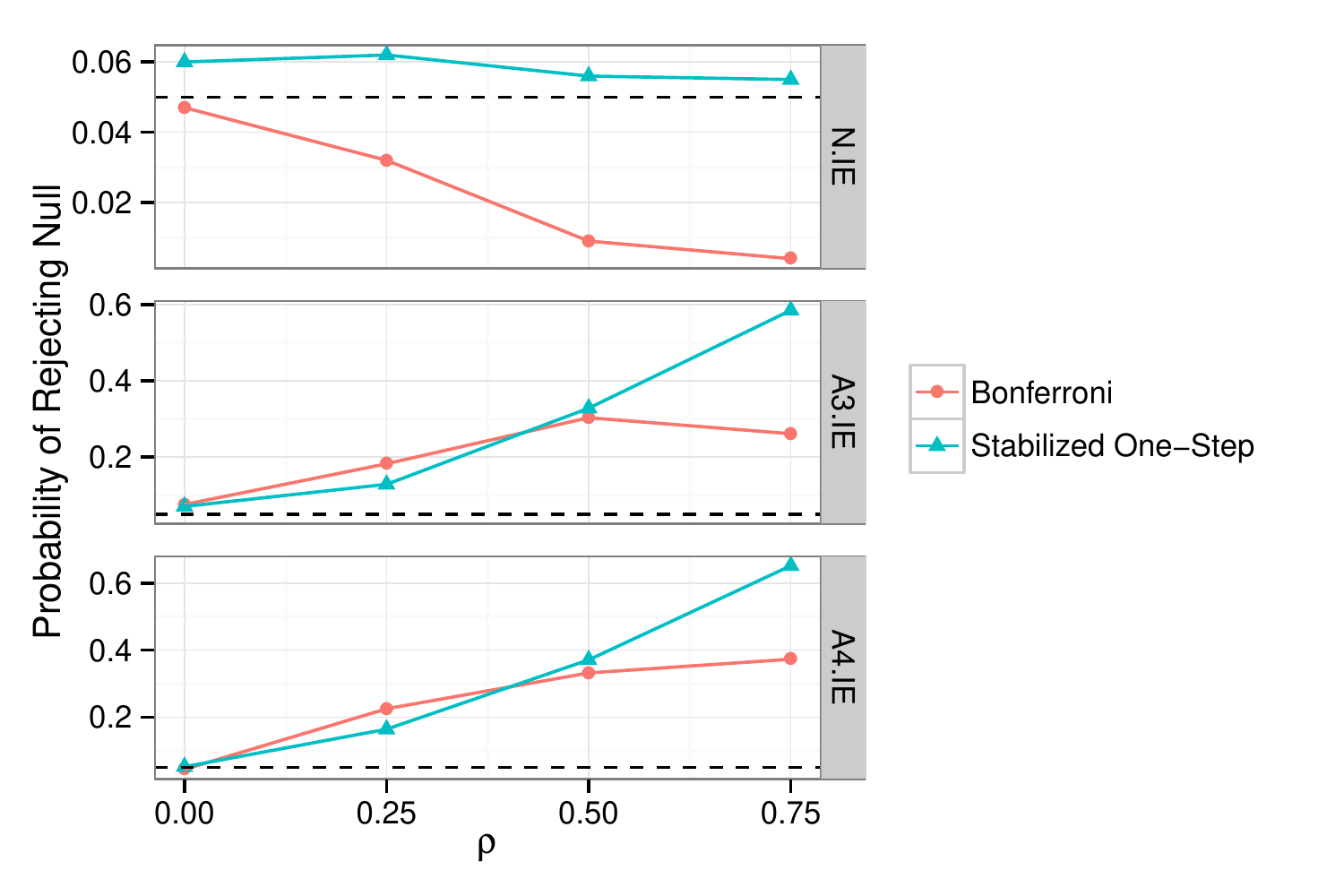}
  \caption{Power of the test from the stabilized one-step and from the Bonferroni-adjusted $t$-test for $(n,p)=(2\,000,30\,000)$ under homoscedastic errors. Unsurprisingly Bonferroni performs well when $\rho=0$. Our method outperforms the Bonferroni procedure when $\rho$ increases.}
  \label{fig:largeIE}
\end{figure}

Figure \ref{fig:largeIE} displays the power of our method and the Bonferroni procedure for $(n,p)$ equal to $(2\,000,30\,000)$. While (unsurprisingly) Bonferroni performs well when the correlation between the predictors in $X$ is low, our method outperforms the Bonferroni procedure when the correlation increases. We expect that, were we able to run the parametric bootstrap analogue of the ART at this sample size, it would outperform all other methods under consideration as it did at the smaller sample sizes. Nonetheless, this method quickly becomes computationally impractical when $p$ gets large, whereas our procedure and the Bonferroni procedure can still be implemented at these sample sizes. Furthermore, our method is robust to heteroscedastic errors and non-linear data generating distributions.

\section{Discussion}
We have presented a general method for estimating the (possibly non-unique) maximum of a family of parameter values indexed by $d\in\mathcal{D}_n$. Such an estimation problem is generally non-regular because minor fluctuations of the data generating distribution can change the subset of $\mathcal{D}_n$ for which the corresponding parameter is maximized. Our estimator takes the form of a sum of the terms of a martingale difference sequence, which quickly allows us to apply the relevant central limit theorem to study its asymptotics and develop Wald-type confidence intervals. The estimator adapts to the non-regularity of the problem, in the sense that we can give reasonable conditions under which it is regular and asymptotically linear when the maximizer is unique so that regularity is possible.

We have applied our approach to the example of \cite{McKeague&Qian2015} in which one wishes to learn about the maximal absolute correlation between a prespecified outcome and a predictor belonging to some set. The sample splitting that is built into our estimator has enabled us to analyze the estimator when the dimension $p$ of the predictor grows with sample size slowly enough so that $n^{-1/2}\log p\rightarrow 0$ as $n$ goes to infinity. While \citeauthor{McKeague&Qian2015} focus on testing the null hypothesis that this maximal absolute correlation is zero, we have established valid confidence intervals for this quantity. The lower bound of our confidence interval is particularly interesting because it is valid under minimal conditions. When $p$ is very large, one might expect that the null of no correlation between the outcome and any of the predictors is unlikely to be true. In these problems, having an estimate of the maximal absolute correlation, or at least a lower bound for this quantity, will likely still be interesting as a measure of the overall relationship between $X$ and $Y$. 

 We have also studied the behavior of this null hypothesis test under local alternatives, showing that our test is consistent when the maximal absolute correlation shrinks to zero slower than $n^{-1/2}(\log\max\{n,p\})^{1/2}$.
When the dimension of the predictor is fixed, the test of \citeauthor{McKeague&Qian2015} is consistent against alternatives shrinking to zero more slowly than $n^{-1/2}$ rather than $(\log n)^{1/2}n^{-1/2}$. We would not be surprised to find that this $(\log n)^{1/2}$ is unnecessary for $p$ fixed and can be removed using more refined proof techniques.


\citeauthor{McKeague&Qian2015} do not require that $Y$ and the coordinates of $X$ have range in $[-1,1]$. We have made this boundedness assumption out of convenience for our proofs and expect that we can replace the boundedness assumptions with appropriate moment assumptions without significantly changing the results. Our simulation results support this claim. The boundedness condition is not as restrictive as it may first seem, as unbounded $X$ and $Y$ can be rescaled to be to be bounded. Since the sharp null $H_0 : \Psi_n(P_0)=0$ is invariant to strictly monotonic transformations of $X$ and $Y$, our theoretical results yield a valid of $H_0$ test after applying, e.g., the sigmoid transformation to $X$ and $Y$.

We note that, in our simulations, the parametric bootstrap analogue of the ART achieves the highest power among competing methods in settings where we are able to run this procedure and the data generating distribution has homoscedastic errors. Nonetheless, this method is invalid under heteroscedastic errors, as we showed in our simulation. The theory for this method is also developed under a linear model, which will not exactly hold in practice. Furthermore, this procedure as currently described is computationally expensive and does not scale well to large data sets, especially when the dimension of the predictor $p$ is large. This difficulty occurs because the procedure requires the computation of a $p\times p$ covariance matrix. The earlier ART method presented in \cite{McKeague&Qian2015}, which achieves similar power to its parametric bootstrap analogue, has been shown to be even more computationally burdensome due to its use of a double bootstrap. Thus we believe our method represents an important contribution to the variable screening literature: it is computationally efficient, and has asymptotic theory supporting its power against local alternatives and increasing covariate dimension. None of the earlier works have given rigorous asymptotic theory when the dimension increases with sample size. Given our simulations, we also believe that developing rigorous asymptotic theory under increasing dimension, heteroscedastic errors, and nonlinear data generating distributions for the ART methods is an important area of future work.

The stabilized one-step estimator presented in this paper applies to many other situations not considered in this paper. In an earlier work, we showed that this estimator is useful for estimating the mean outcome under an optimal individualized treatment strategy \cite{Luedtke&vanderLaan2015b}, where the class $\mathcal{D}_n$ now indexes functions mapping from the covariate space to the set of possible treatment decisions. Thanks to the martingale structure of our estimator, the stabilized one-step estimator can be used to construct confidence intervals when the data is drawn sequentially so that the data generating distribution for observation $j$ can depend on that of the first $j-1$ observations. One interesting example along these lines is to obtain inference for the value of the optimal arm in a multi-armed bandit problem, even in the case where the optimal arm is non-unique and the reward distributions for the optimal arms have different variances. We look forward to seeing further applications of the general template for a stabilized one-step estimator that we have presented in this paper.

\bibliographystyle{Chicago}
 \bibliography{../../persrule}

\appendix
\setcounter{equation}{0}
\renewcommand{\theequation}{A.\arabic{equation}}
\setcounter{theorem}{0}
\renewcommand{\thetheorem}{A.\arabic{theorem}}
\renewcommand{\thecorollary}{A.\arabic{theorem}}
\renewcommand{\thelemma}{A.\arabic{theorem}}
\renewcommand{\theproposition}{A.\arabic{theorem}}
\renewcommand{\theconjecture}{A.\arabic{theorem}}

\section*{Appendix}



\section{Proofs and results for the \cite{McKeague&Qian2015} example}
\begin{lemma} \label{lem:rembd}
Fix $\tilde{\delta}>0$ and $d\in\mathcal{D}_n$. For any $P$ with $\min_{k\in\mathcal{K}_n} s_P^2(X_k)>\tilde{\delta}$ and $s_P^2(Y)>\tilde{\delta}$, (\ref{eq:rembd}) holds. 
\end{lemma}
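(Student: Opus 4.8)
The plan is to work straight from the definition of the second-order remainder. By (\ref{eq:psi1stord}), $\Rem^d(P)=\Psi^d(P)-\Psi^d(P_0)+\int D^d(P)(o)\,dP_0(o)$. Fix $d=(k,m)$; since $|m|=1$ and the whole expression changes only by a sign under $m\mapsto-m$, it suffices to treat $m=1$, in which case $\Psi^d(P)-\Psi^d(P_0)=\Corr_P(X_k,Y)-\Corr_0(X_k,Y)$. First I would compute $\int D^d(P)(o)\,dP_0(o)$ in closed form. Because $D^d(P)$ is built only from the centered variables $X_k-E_P[X_k]$ and $Y-E_P[Y]$, the integral reduces to first and second moments of these under $P_0$, each of which expands cleanly: $E_{P_0}[(X_k-E_P[X_k])^2]=s_0^2(X_k)+(E_{P_0}[X_k]-E_P[X_k])^2$, similarly for $Y$, and $E_{P_0}[(X_k-E_P[X_k])(Y-E_P[Y])]=\Cov_0(X_k,Y)+(E_{P_0}[X_k]-E_P[X_k])(E_{P_0}[Y]-E_P[Y])$. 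This gives an explicit formula for $\int D^d(P)\,dP_0$ in terms of $\Cov_0(X_k,Y)$, $s_0(X_k)$, $s_0(Y)$, the two mean differences, and the $P$-functionals $s_P(X_k)$, $s_P(Y)$, $\Corr_P(X_k,Y)$.

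The next step is to add $\Corr_P(X_k,Y)-\Corr_0(X_k,Y)$ to this formula and reorganize so that all first-order pieces cancel. Writing $\Corr_0(X_k,Y)=\Cov_0(X_k,Y)/(s_0(X_k)s_0(Y))$ and introducing the ratios $a\equiv s_0(X_k)/s_P(X_k)$, $b\equiv s_0(Y)/s_P(Y)$, the part of the expression not involving the mean differences collapses to $(\Corr_P(X_k,Y)-\Corr_0(X_k,Y))(1-ab)-\tfrac12\Corr_P(X_k,Y)(a-b)^2$, while the remaining part is already manifestly quadratic, being $s_P(X_k)^{-1}s_P(Y)^{-1}(E_{P_0}[X_k]-E_P[X_k])(E_{P_0}[Y]-E_P[Y])$ minus $\tfrac12\Corr_P(X_k,Y)$ times the two squared mean differences over $s_P^2(X_k)$ and $s_P^2(Y)$. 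This is the algebraic identity that exhibits $\Rem^d(P)$ as second-order.

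Then I would bound each of these terms. The first term is at most $s_P(X_k)^{-1}s_P(Y)^{-1}|s_P(X_k)s_P(Y)-s_0(X_k)s_0(Y)|\,|\Corr_P(X_k,Y)-\Corr_0(X_k,Y)|$, since $1-ab=(s_P(X_k)s_P(Y)-s_0(X_k)s_0(Y))/(s_P(X_k)s_P(Y))$. For the term with $(a-b)^2$, write $a-b=(s_0(X_k)-s_P(X_k))/s_P(X_k)-(s_0(Y)-s_P(Y))/s_P(Y)$ so that it vanishes at $P=P_0$, expand via $(u-v)^2\le 2u^2+2v^2$, and use $|\Corr_P(X_k,Y)|\le1$ to land on the last two terms of (\ref{eq:rembd}); for the mean terms apply AM--GM to the cross term to fold it into the two squared mean differences. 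Throughout I would use $|X_k|,|Y|\le1$ (so every variance is at most $1$), $|m|=1$, $|\Corr|\le1$, and the hypotheses $s_P(X_k),s_P(Y)>\tilde{\delta}$ to control the prefactors, collecting everything into the $\tilde{\delta}^{-1}$ factor displayed in (\ref{eq:rembd}).

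The main obstacle is the bookkeeping in the middle step: arranging the algebra so that the first-order terms cancel exactly and the surviving terms regroup into precisely the five terms of (\ref{eq:rembd}) with the stated $\tilde{\delta}$-dependence in the prefactor, rather than a cruder one. A cleaner conceptual framing is available as a sanity check: $\Psi^d(P)=g(m_P)$ for a fixed smooth function $g$ of the five raw moments $m_P\equiv(E_P[X_k],E_P[Y],E_P[X_k^2],E_P[Y^2],E_P[X_kY])$, the canonical gradient equals $\nabla g(m_P)$ dotted with the corresponding centered moment functions of $o$, and hence $\Rem^d(P)$ is exactly the first-order Taylor remainder of $g$ at $m_P$ evaluated at $m_{P_0}$, so $|\Rem^d(P)|\le\tfrac12\sup\|\nabla^2 g\|\cdot\|m_{P_0}-m_P\|^2$; this makes finiteness and smallness transparent, but pinning down the explicit bound (\ref{eq:rembd}) still requires the direct computation above.
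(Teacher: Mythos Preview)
Your approach is essentially identical to the paper's. The paper simply asserts the exact identity (\ref{eq:exactrem}) for $\Rem^d(P)$---which is precisely the decomposition you derive, since your term $(\Corr_P-\Corr_0)(1-ab)$, the mean-difference terms, and $-\tfrac12\Corr_P(a-b)^2$ match the four lines of (\ref{eq:exactrem}) term by term---and then bounds it via the triangle inequality, $|\Corr_P|\le 1$, $ab\le(a^2+b^2)/2$, and $s_P^2>\tilde{\delta}$, exactly as you propose.

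One small bookkeeping point: your stated decomposition $a-b=(s_0(X_k)-s_P(X_k))/s_P(X_k)-(s_0(Y)-s_P(Y))/s_P(Y)$ followed by $(u-v)^2\le 2u^2+2v^2$ yields prefactors $1/s_P^2(X_k)$ and $1/s_P^2(Y)$ on the squared standard-deviation differences, not the specific $s_0^2(Y)/s_P^2(Y)$ and $s_0^2(X_k)/s_P^2(X_k)$ that appear in (\ref{eq:rembd}). To land on exactly those terms, decompose instead as $s_P(X_k)s_0(Y)-s_0(X_k)s_P(Y)=s_0(Y)[s_P(X_k)-s_0(X_k)]-s_0(X_k)[s_P(Y)-s_0(Y)]$ before squaring, and then absorb one factor of $1/s_P^2(X_k)$ (resp.\ $1/s_P^2(Y)$) into the global $\tilde{\delta}^{-1}$. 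Either way the bound is correct and of the same order; this is only about matching the displayed form precisely.
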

\begin{proof}
Straightforward but tedious calculations show that
\begin{align}
\Rem^d(P)= m\Bigg(&\frac{1}{s_P(X_k)s_P(Y)}\left[s_P(X_k)s_P(Y)-s_0(X_k)s_0(Y)\right]\left[\Corr_P(X_k,Y)-\Corr_0(X_k,Y)\right] \nonumber \\
&+ \frac{\left(E_P[X_k]-E_{P_0}[X_k]\right)\left(E_P[Y]-E_{P_0}[Y]\right)}{s_P(X_k)s_P(Y)} \nonumber \\
&- \frac{\Corr_P(X_k,Y)}{2}\left[\frac{\left(E_P[X_k]-E_{P_0}[X_k]\right)^2}{s_P^2(X_k)} + \frac{\left(E_P[Y]-E_{P_0}[Y]\right)^2}{s_P^2(Y)}\right] \nonumber \\
&- \frac{\Corr_P(X_k,Y)}{2s_P^2(X_k)s_P^2(Y)}\left[s_P(X_k)s_0(Y)-s_0(X_k)s_P(Y)\right]^2\Bigg). \label{eq:exactrem}
\end{align}
The result follows by taking the absolute value of both sides, applying the triangle inquality, using that $ab\le (a^2+b^2)/2$ for any real $a,b$, $\Corr_P(X_k,Y)\le 1$, and the lower bound $\tilde{\delta}$ on the variances.
\end{proof}

We now establish high probability bounds on the difference between $s_{P_j}^2(X_k)$, $s_{P_j}^2(Y)$, $\Corr_{P_j}(X_k,Y)$, and $\hat{\sigma}_{nj}$ and their population counterparts, uniformly over $k\in\mathcal{K}_n$ and $j$. We will use $\lesssim$ to denote ``less than or equal to up to a universal multiplicative constant''. Let $\mathcal{F}_n$ denote the following class of functions mapping from $\mathcal{O}\equiv \mathbb{R}^\infty\times \mathbb{R}$ to the real line:
\begin{align}
\left\{(x,y)\mapsto x_k^r y^s : 0\le r,s,\le 4;\;r+s\le 4\;k\in\mathcal{K}_n\right\}. \label{eq:Fndef}
\end{align}
Note that $|\mathcal{F}_n|\lesssim p$. We will use this class to develop concentration results about our estimates the needed portions of the likelihood. This class is actually somewhat larger than is needed for most of our results, as in fact
\begin{align*}
\left\{(x,y)\mapsto x_k y : k\in\mathcal{K}_n\right\}&\cup \left\{(x,y)\mapsto x_k : k\in\mathcal{K}_n\right\}\cup\left\{(x,y)\mapsto y\right\} \\
&\cup \left\{(x,y)\mapsto x_k^2 : k\in\mathcal{K}_n\right\}\cup\left\{(x,y)\mapsto y^2\right\}
\end{align*}
suffices for concentrating our estimates of $\Corr_0(X_k,Y)$, $s_0(X_k)$, and $s_0(Y)$. Nonetheless, using this larger class $\mathcal{F}_n$ will allow us to prove results about the concentration of $\hat{\sigma}_{nj}^2$ about $\sigma_{nj}^2$, and just stating it as a single class is convenient for brevity.

For $f\in\mathcal{F}_n$ and $j\in\{1,\ldots,n\}$, define the empirical process as
\begin{align*}
\mathbb{G}_{nj}&\equiv \frac{1}{\sqrt{j}}\sum_{i=1}^j \left[f(O_i) - P_0 f\right] = \sqrt{j}(P_j-P_0)f,
\end{align*}
where we use $P_j$ denote the empirical distribution of $O_0,\ldots,O_{j-1}$ and $Pf\equiv E_P[f(O)]$ for any distribution $P$. Let $\Fnorm{\mathbb{G}_{nj}}\equiv \sup_{f\in\mathcal{F}_n}\left|\mathbb{G}_{nj}\right|$. By Theorem 2.14.1 in \cite{vanderVaartWellner1996} shows that
\begin{align}
E\Fnorm{\mathbb{G}_{nj}}&\lesssim \sqrt{\log\#\mathcal{F}_n}\lesssim \sqrt{\log p} \label{eq:expbd},
\end{align}
where the expectation is over the draws $O_1,\ldots,O_j$. We have used that our class is bounded by the constant $1$.

Let
\begin{align}
 K_{nj}&\equiv j^{-1/2}\sqrt{\log\max\{n,p\}}. \label{eq:Knj}
\end{align}
Define the events
\begin{align*}
\mathcal{A}_{nj}&\equiv \left\{\max_{f\in\mathcal{F}_n}\left|(P_j-P_0)f\right|\le CK_{nj}\right\}\textnormal{  for all }j=1,\ldots,n, \\
\mathcal{A}_n&\equiv \cap_{j=1}^n \mathcal{A}_{nj},
\end{align*}
where $C$ in the definition of $\mathcal{A}_{nj}$ is equal the smallest universal constant satisfying (\ref{eq:expbd}) plus $1$.

\begin{lemma} \label{lem:highprob}
For any sample size $n$, the event $\mathcal{A}_n$ occurs with probability at least $1-n/\max\{n^2,p\}\ge 1-1/n$.
\end{lemma}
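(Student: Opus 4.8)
The plan is to reduce the claim to a single-index estimate on $P(\mathcal{A}_{nj}^c)$ and then take a union bound over $j$. Since $\mathbb{G}_{nj}=\sqrt j(P_j-P_0)f$ and $CK_{nj}=C\sqrt{\log\max\{n,p\}}/\sqrt j$, the complement $\mathcal{A}_{nj}^c$ is exactly the event $\{\Fnorm{\mathbb{G}_{nj}}>C\sqrt{\log\max\{n,p\}}\}$. I would first split $\Fnorm{\mathbb{G}_{nj}}$ into its expectation, which (\ref{eq:expbd}) controls by $C_0\sqrt{\log p}\le C_0\sqrt{\log\max\{n,p\}}$ for the universal constant $C_0$ appearing there (with $C>C_0$), and its fluctuation above the mean; on $\mathcal{A}_{nj}^c$ the latter must exceed $(C-C_0)\sqrt{\log\max\{n,p\}}$.

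The main step is a concentration bound for this fluctuation. Regarded as a function of the i.i.d.\ observations $O_1,\dots,O_j$, the quantity $\Fnorm{\mathbb{G}_{nj}}$ changes by at most $2/\sqrt j$ when a single observation is replaced: replacing one observation shifts each coordinate $\mathbb{G}_{nj}$ (for fixed $f\in\mathcal{F}_n$) by at most $2/\sqrt j$, because $\mathcal{F}_n$ is uniformly bounded by $1$, and hence shifts the supremum by at most $2/\sqrt j$ as well. The bounded-differences (McDiarmid) inequality then gives $P(\Fnorm{\mathbb{G}_{nj}}-E\Fnorm{\mathbb{G}_{nj}}>t)\le\exp(-t^2/2)$ for all $t\ge 0$, and taking $t=(C-C_0)\sqrt{\log\max\{n,p\}}$ yields $P(\mathcal{A}_{nj}^c)\le\max\{n,p\}^{-(C-C_0)^2/2}$. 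Provided the additive slack in the definition of $C$ makes $(C-C_0)^2/2\ge 2$, and using $\max\{n,p\}^2=\max\{n^2,p^2\}\ge\max\{n^2,p\}$ (valid since $p\ge 1$), this is at most $\max\{n,p\}^{-2}\le 1/\max\{n^2,p\}$, uniformly in $j\in\{1,\dots,n\}$. (One may instead bypass (\ref{eq:expbd}): apply Hoeffding's inequality to the i.i.d.\ $[-1,1]$-valued $f(O_1),\dots,f(O_j)$ for each fixed $f$ and union bound over the $\lesssim p$ functions in $\mathcal{F}_n$, the extra factor $p$ being harmless since $p\le\max\{n,p\}$, at the cost of a somewhat larger universal constant.)

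To conclude, since $\mathcal{A}_n^c=\bigcup_{j=1}^n\mathcal{A}_{nj}^c$, a union bound gives $P(\mathcal{A}_n^c)\le\sum_{j=1}^n P(\mathcal{A}_{nj}^c)\le n/\max\{n^2,p\}$, whence $P(\mathcal{A}_n)\ge 1-n/\max\{n^2,p\}\ge 1-1/n$ because $\max\{n^2,p\}\ge n^2$. The only point needing genuine care is verifying that $C$ carries enough additive slack over the expectation-bound constant for the exponent $(C-C_0)^2/2$ to reach $2$; this is precisely why $K_{nj}$ is built from $\log\max\{n,p\}$ rather than $\log p$, since the surplus $\log n$ pays for the union bound over the $n$ indices $j$. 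The remaining ingredients --- the bounded-differences computation and the elementary comparison of $\max\{n,p\}$ with $\max\{n^2,p\}$ --- are routine.
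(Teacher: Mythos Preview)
Your approach---McDiarmid's bounded-differences inequality for $\Fnorm{\mathbb{G}_{nj}}$ followed by a union bound over $j$---is exactly the paper's. The one place you leave a loose end is the constant bookkeeping: the paper fixes $C=C_0+1$ (slack exactly $1$), so your proviso $(C-C_0)^2/2\ge 2$ fails as stated. The paper handles this not by enlarging the slack but by choosing the McDiarmid deviation level $t=\sqrt{\tfrac{1}{2}\log\max\{n^2,p\}}$, which yields failure probability $\exp(-2t^2)=1/\max\{n^2,p\}$ directly, and then uses $\tfrac{1}{2}\log\max\{n^2,p\}\le\log\max\{n,p\}$ to absorb $t$ into the ``$+1$''. (The paper also claims a bounded-differences constant of $1/\sqrt{j}$ rather than your $2/\sqrt{j}$; yours is the conservative choice, and with it a slightly larger $C$ would indeed be required---your instinct that this is the delicate step is correct.)
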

\begin{proof}
We first upper bound the probability of the complement of $\mathcal{A}_{nj}$ for each $n,j$. Fix $n$ and $j\le n$. By the bounds on $X$ and $Y$, changing one $O_i$ in $(O_1,\ldots,O_j)$ to some other value in the support of $P_0$ can change $b$ by at most $1/\sqrt{j}$. Thus $(O_1,\ldots,O_j)\mapsto \Fnorm{\mathbb{G}_{nj}}$ satisfies the bounded differences property with bound $1/\sqrt{j}$, and we may apply McDiarmid's inequality \citep{McDiarmid1989} to show that, with probability at most $1-\exp(-2t^2)$, $\Fnorm{\mathbb{G}_{nj}}\le E\Fnorm{\mathbb{G}_{nj}} + t$. Choosing $t=\sqrt{\frac{\log\max\{n^2,p\}}{2}}$ and using (\ref{eq:expbd}) yields that, with probability at least $1-1/\max\{n^2,p\}$, the following inequality holds for all $j=1,\ldots,n$:
\begin{align*}
\Fnorm{\mathbb{G}_{nj}}&\le E\Fnorm{\mathbb{G}_{nj}} + \sqrt{\frac{\log\max\{n^2,p\}}{2}}\le C'\sqrt{\log p} + \sqrt{\log\max\{n,p\}} \\
&\le C\sqrt{\log\max\{n,p\}},
\end{align*}
where $C'$ denotes the universal constant in (\ref{eq:expbd}).

By DeMorgan's laws and a union bound, it follows that the event $\mathcal{A}_n\equiv\cap_j \mathcal{A}_{nj}$ occurs with probability at least $1-n/\max\{n^2,p\}\ge 1-1/n$.
\end{proof}
We have shown that $\mathcal{A}_n$ occurs with high probability. Now we show that our estimates of variances, covariances, and correlations perform well when $\mathcal{A}_n$ occurs.
\begin{lemma} \label{lem:concests}
Fix a sample size $n\ge 2$. The occurrence of $\mathcal{A}_n$ implies that, for all $j=2,\ldots,n$:
\begin{enumerate}[series=corr,noitemsep,label=\arabic*)]
  \item $\max_{k\in\mathcal{K}_n}\left|s_{P_j}(X_k)-s_0(X_k)\right|\lesssim \delta^{-1/2}K_{nj}$; \label{it:spnXs0X}
  \item $\max_{k\in\mathcal{K}_n}\left|s_{P_j}^2(X_k)-s_0^2(X_k)\right|\lesssim K_{nj}$; \label{it:spnXs0Xsq}
  \item $\left|s_{P_j}(Y)-s_0(Y)\right|\lesssim \delta^{-1/2}K_{nj}$; \label{it:spYs0Y}
  \item $\left|s_{P_j}^2(Y)-s_0^2(Y)\right|\lesssim K_{nj}$; \label{it:spYs0Ysq}
  \item $\max_{k\in\mathcal{K}_n}\left|\Corr_{P_j}(X_k,Y)-\Corr_{P_0}(X_k,Y)\right|\lesssim \delta^{-1}K_{nj}$, \label{it:corr}
\end{enumerate}
where we define $\Corr_{P_j}(X_k,Y)=0$ when either $s_{P_j}(X_k)$ or $s_{P_j}(Y)$ is equal to zero.
\end{lemma}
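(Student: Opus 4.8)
The plan is to fix a sample size $n\ge 2$, work on the event $\mathcal{A}_n$, and note that $\mathcal{A}_n$ already controls every moment that enters the definitions of $s_{P_j}^2(X_k)$, $s_{P_j}^2(Y)$, and $\Corr_{P_j}(X_k,Y)$; what remains is an elementary, deterministic computation. Concretely, the monomials $x_k$, $x_k^2$, $y$, $y^2$, and $x_k y$ all lie in $\mathcal{F}_n$, so on $\mathcal{A}_n$ each of $|E_{P_j}[X_k]-E_{P_0}[X_k]|$, $|E_{P_j}[X_k^2]-E_{P_0}[X_k^2]|$, $|E_{P_j}[Y]-E_{P_0}[Y]|$, $|E_{P_j}[Y^2]-E_{P_0}[Y^2]|$, and $|E_{P_j}[X_kY]-E_{P_0}[X_kY]|$ is at most $C K_{nj}$, uniformly over $k\in\mathcal{K}_n$ and $j\le n$. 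Everything below uses this together with the fact that $|X_k|\le 1$ and $|Y|\le 1$, which force all first and second moments and all covariances to lie in $[-1,1]$.

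For \ref{it:spnXs0Xsq} and \ref{it:spYs0Ysq}, I would expand $s_{P_j}^2(X_k)-s_0^2(X_k)=(E_{P_j}[X_k^2]-E_{P_0}[X_k^2])-((E_{P_j}[X_k])^2-(E_{P_0}[X_k])^2)$ and factor the last difference as $(E_{P_j}[X_k]-E_{P_0}[X_k])(E_{P_j}[X_k]+E_{P_0}[X_k])$; the sum of the means has absolute value at most $2$, so the whole expression is $\lesssim K_{nj}$, and the argument for $Y$ is identical. Statements \ref{it:spnXs0X} and \ref{it:spYs0Y} then follow from \ref{it:spnXs0Xsq} and \ref{it:spYs0Ysq} via the elementary inequality $|\sqrt{a}-\sqrt{b}|\le |a-b|/\sqrt{b}$ applied with $b=s_0^2(X_k)>\delta^2$ (respectively $b=s_0^2(Y)>\delta^2$), supplemented where needed by the trivial bound $|s_{P_j}(X_k)-s_0(X_k)|\le 1$ to absorb the regime in which $K_{nj}$ is not small; collecting the resulting powers of $\delta$ gives the stated bounds.

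For the correlation bound \ref{it:corr}, the same moment control first yields $|\Cov_{P_j}(X_k,Y)-\Cov_{P_0}(X_k,Y)|\lesssim K_{nj}$. I would then split on the size of $K_{nj}$: if $K_{nj}$ exceeds a fixed multiple of $\delta^2$ the claim is trivial since both correlations lie in $[-1,1]$ (and this is the only regime in which an empirical standard deviation can vanish, so the convention $\Corr_{P_j}(X_k,Y)=0$ causes no trouble); otherwise \ref{it:spnXs0Xsq} and \ref{it:spYs0Ysq} force $s_{P_j}^2(X_k)\ge\delta^2/2$ and $s_{P_j}^2(Y)\ge\delta^2/2$, so $\Corr_{P_j}(X_k,Y)=\Cov_{P_j}(X_k,Y)/[s_{P_j}(X_k)s_{P_j}(Y)]$ has denominator bounded below by $\delta^2/2$. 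Applying the quotient estimate $|A/B-A_0/B_0|\le |A-A_0|/B+|A_0|\,|B-B_0|/(BB_0)$ with $A=\Cov_{P_j}(X_k,Y)$ and $B=s_{P_j}(X_k)s_{P_j}(Y)$, and bounding $|B-B_0|=|s_{P_j}(X_k)s_{P_j}(Y)-s_0(X_k)s_0(Y)|$ by \ref{it:spnXs0X}, \ref{it:spYs0Y}, and $|X_k|,|Y|\le 1$, reduces \ref{it:corr} to the covariance and standard-deviation bounds already in hand.

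I expect the correlation statement to be the main obstacle: it is the one quantity not handled by the triangle inequality alone, because it is a ratio whose crude control degrades when an empirical standard deviation is near zero. This is exactly what forces the bootstrap through the variance concentration and the small-$K_{nj}$ versus large-$K_{nj}$ dichotomy above, and keeping the exponent of $\delta$ sharp through the quotient estimate — rather than merely showing it is some fixed negative power of $\delta$ — is the part of the book-keeping that takes the most care.
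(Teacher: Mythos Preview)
Your treatment of \ref{it:spnXs0X}--\ref{it:spYs0Ysq} is essentially identical to the paper's: both expand the variance as a difference of moments, bound the factored sum of means by $2$, and then pass from variances to standard deviations via $|a-b|=|a^2-b^2|/(a+b)$ with the population standard deviation providing the lower bound on the denominator.

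For \ref{it:corr} you take a genuinely different route. You split on the size of $K_{nj}$, arguing that when $K_{nj}$ is small the empirical standard deviations inherit a lower bound from $s_0$, so the textbook quotient estimate $|A/B-A_0/B_0|\le |A-A_0|/B+|A_0||B-B_0|/(BB_0)$ applies with $B=s_{P_j}(X_k)s_{P_j}(Y)$; when $K_{nj}$ is large the bound is vacuous. The paper instead writes a telescoping identity
\[
\Corr_{P_j}-\Corr_{P_0}=\frac{\Cov_{P_j}-\Cov_{P_0}}{s_0(X_k)s_0(Y)}-\frac{\Corr_{P_j}\,s_{P_j}(Y)}{s_0(X_k)s_0(Y)}\bigl[s_{P_j}(X_k)-s_0(X_k)\bigr]-\frac{\Corr_{P_j}}{s_0(Y)}\bigl[s_{P_j}(Y)-s_0(Y)\bigr],
\]
in which every denominator is a \emph{population} standard deviation and the empirical correlation appears only as a multiplicative factor bounded by $1$. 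This single identity handles all $j$ at once, makes the convention $\Corr_{P_j}=0$ irrelevant (no empirical $s_{P_j}$ is ever divided by), and delivers the stated $\delta^{-1}$ directly. Your approach is sound in spirit, but note that the case threshold you propose (``a fixed multiple of $\delta^2$'') cannot yield the sharp $\delta^{-1}$: in the large-$K_{nj}$ regime you would only have $\delta^{-1}K_{nj}\gtrsim\delta$, which does not dominate the trivial bound $2$ by a universal constant, so your argument as written produces $\delta^{-2}K_{nj}$ rather than $\delta^{-1}K_{nj}$. The paper's telescoping device is precisely what buys the extra power of $\delta$ you flagged as the delicate point.
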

\begin{proof}
Suppose $\mathcal{A}_n$ holds and fix $k\in\mathcal{K}_n$. The triangle inequality and the bounds on $X_k$ yield that
\begin{align*}
\left|s_{P_j}^2(X_k) - s_0^2(X_k)\right|&= \left|\left(E_{P_j}[X_k^2]-E_{P_0}[X_k^2]\right)-\left(E_{P_j}[X_k] + E_{P_0}[X_k]\right)\left(E_{P_j}[X_k] - E_{P_0}[X_k]\right)\right| \nonumber \\
&\le \left|E_{P_j}[X_k^2]-E_{P_0}[X_k^2]\right| + 2\left|E_{P_j}[X_k] - E_{P_0}[X_k]\right| \lesssim K_{nj}.
\end{align*}
This gives \ref{it:spnXs0Xsq}. For \ref{it:spnXs0X}, note that
\begin{align*}
\left|s_{P_j}(X_k) - s_0(X_k)\right| = \left|\frac{s_{P_j}^2(X_k) - s_0^2(X_k)}{s_{P_j}(X_k) + s_0(X_k)}\right|\lesssim \delta^{-1/2} K_{nj}.
\end{align*}
The same argument yields \ref{it:spYs0Y} and \ref{it:spYs0Ysq}.

Again fix $k$. An application of the triangle inequality and the bounds on $X_k$ and $Y$ readily yield that $\left|\Cov_{P_j}(X_k,Y)-\Cov_{P_0}(X_k,Y)\right|\lesssim K_{nj}$. Furthermore,
\begin{align*}
\Corr_{P_j}&(X_k,Y)-\Corr_{P_0}(X_k,Y) \\
=&\, \frac{\Cov_{P_j}(X_k,Y)-\Cov_{P_0}(X_k,Y)}{s_0(X_k)s_0(Y)} - \frac{\Corr_{P_j}(X_k,Y)s_{P_j}(Y)}{s_0(X_k)s_0(Y)}\left[s_{P_j}(X_k)-s_0(X_k)\right] \\
&- \frac{\Corr_{P_j}(X_k,Y)}{s_0(Y)}\left[s_{P_j}(Y)-s_0(Y)\right].
\end{align*}

Taking the absolute value of both sides, applying the triangle inequality, and using the lower bounds on $s_0(X_k)$ and $s_0(Y)$ and the upper bound on $\Corr_{P_j}(X_k,Y)$ yields that $\left|\Corr_{P_j}(X_k,Y)-\Corr_{P_0}(X_k,Y)\right|\lesssim \delta^{-1}K_{nj}$. This holds for all $k$, so \ref{it:corr} holds.
\end{proof}

\begin{lemma} \label{lem:spbds}
Let $C$ be the smallest universal constant in \ref{it:spnXs0Xsq} of that Lemma \ref{lem:concests}, and let $n$ be any natural number satisfying $n\ge \lceil 4C^2\delta^{-2}\log\max\{n,p\}\rceil\equiv J(n,\delta)$. Under these conditions, the occurrence of $\mathcal{A}_n$ implies that, for all $j=J(n,\delta),\ldots,n$,
\begin{enumerate}[resume*=corr]
  \item $\min_{k\in\mathcal{K}_n} s_{P_j}^2(X_k)\ge \delta/2$ and $\min_{k\in\mathcal{K}_n} s_{P_j}^2(Y)\ge \delta/2$; \label{it:spnawayzero}
  \item $\max_{k\in\mathcal{K}_n}\frac{s_{P_0}^2(X_k)}{s_{P_j}^2(X_k)}\le 2$ and $\frac{s_{P_0}^2(Y)}{s_{P_j}^2(Y)}\le 2$; \label{it:sdratio}
  \item $\left|\hat{\sigma}_{nj}^2-\sigma^2_{nj}\right|\lesssim \delta^{-2}K_{nj}$; \label{it:sighatapproxsig}
  \item $\left|\hat{\sigma}_{nj}^2 - \Var_{P_0}\left(D_{nj}(P_0)(O)\right)\right|\lesssim \delta^{-2}K_{nj} + \left(\widehat{\Rem}_{nj}\right)^2$. \label{it:sighatapproxsig0}
\end{enumerate}
\end{lemma}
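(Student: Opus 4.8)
The plan is to establish the four items in turn. Items \ref{it:spnawayzero} and \ref{it:sdratio} follow directly from the concentration estimates of Lemma~\ref{lem:concests} once $j$ is large enough, and I would begin there. For $j\ge J(n,\delta)$ the definition of $J(n,\delta)$ gives $K_{nj}^2=\log\max\{n,p\}/j\le\delta^2/(4C^2)$, hence $CK_{nj}\le\delta/2$; on $\mathcal{A}_n$, items \ref{it:spnXs0Xsq} and \ref{it:spYs0Ysq} of Lemma~\ref{lem:concests} then force $|s_{P_j}^2(X_k)-s_0^2(X_k)|\le\delta/2$ and $|s_{P_j}^2(Y)-s_0^2(Y)|\le\delta/2$ for every $k\in\mathcal{K}_n$, and combining these with the standing lower bound $\delta$ on these population variances yields $s_{P_j}^2(X_k)\ge\delta/2$ and $s_{P_j}^2(Y)\ge\delta/2$, which is \ref{it:spnawayzero}. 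Since moreover $s_{P_j}^2(X_k)\ge s_0^2(X_k)-\delta/2\ge s_0^2(X_k)/2$, and likewise for $Y$, item \ref{it:sdratio} follows at once.

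For \ref{it:sighatapproxsig} the key observation I would exploit is that the canonical gradient evaluated at an empirical distribution with positive relevant variances is mean zero under that distribution: the computation verifying that $D^d$ is a valid gradient is an algebraic identity and so applies at $P=P_j$ by \ref{it:spnawayzero}. Hence $\tfrac{1}{j}\sum_{i=1}^j\hat{D}_{nj}(O_i)=0$, so that $\hat\sigma_{nj}^2=P_j(\hat{D}_{nj})^2$ whereas $\sigma_{nj}^2=P_0(\hat{D}_{nj})^2-(P_0\hat{D}_{nj})^2$, writing $\hat{D}_{nj}=D^{d_{nj}}(P_j)$ and $d_{nj}=(k_{nj},m_{nj})$. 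Now $\hat{D}_{nj}$ is a polynomial in the single pair $(x_{k_{nj}},y)$ of degree at most $2$ and $(\hat{D}_{nj})^2$ one of degree at most $4$, whose coefficients are bounded combinations of $E_{P_j}[X_{k_{nj}}]$, $E_{P_j}[Y]$, $\Corr_{P_j}(X_{k_{nj}},Y)$ and the reciprocals of $s_{P_j}^2(X_{k_{nj}})$ and $s_{P_j}^2(Y)$; by \ref{it:spnawayzero} those reciprocals are at most $2\delta^{-1}$, so the coefficients of $\hat{D}_{nj}$ are $\lesssim\delta^{-1}$ and those of $(\hat{D}_{nj})^2$ are $\lesssim\delta^{-2}$. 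Every monomial $x_{k_{nj}}^r y^s$ with $r+s\le4$ lies in the class $\mathcal{F}_n$ of (\ref{eq:Fndef}), so on $\mathcal{A}_n$ each such monomial satisfies $|(P_j-P_0)(x_{k_{nj}}^r y^s)|\le CK_{nj}$; the triangle inequality then yields $|P_j(\hat{D}_{nj})^2-P_0(\hat{D}_{nj})^2|\lesssim\delta^{-2}K_{nj}$ and $|P_0\hat{D}_{nj}|=|(P_0-P_j)\hat{D}_{nj}|\lesssim\delta^{-1}K_{nj}$, whence $(P_0\hat{D}_{nj})^2\lesssim\delta^{-2}K_{nj}^2\lesssim\delta^{-2}K_{nj}$ since $K_{nj}\lesssim1$ on the range $j\ge J(n,\delta)$. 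Adding the two bounds gives \ref{it:sighatapproxsig}.

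For \ref{it:sighatapproxsig0} I would use that the true-parameter gradient is mean zero under $P_0$, so $\Var_{P_0}(D^{d_{nj}}(P_0)(O))=P_0(D^{d_{nj}}(P_0))^2$, and by \ref{it:sighatapproxsig} it then suffices to bound $|\sigma_{nj}^2-P_0(D^{d_{nj}}(P_0))^2|$. Comparing the degree-$\le4$ polynomial expansions of $(D^{d_{nj}}(P_j))^2$ and $(D^{d_{nj}}(P_0))^2$ coefficient by coefficient---each coefficient difference being controlled by $|E_{P_j}[X_{k_{nj}}]-E_{P_0}[X_{k_{nj}}]|$, $|s_{P_j}^2(X_{k_{nj}})-s_0^2(X_{k_{nj}})|$, $|\Corr_{P_j}(X_{k_{nj}},Y)-\Corr_0(X_{k_{nj}},Y)|$ and their $Y$ analogues, all $\lesssim\delta^{-1}K_{nj}$ by Lemma~\ref{lem:concests}, together with \ref{it:spnawayzero} and the standing lower bound on $s_0^2$---gives $|P_0(D^{d_{nj}}(P_j))^2-P_0(D^{d_{nj}}(P_0))^2|\lesssim\delta^{-2}K_{nj}$, using that the relevant moments of $P_0$ are bounded by $1$. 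For the remaining term, the first-order expansion (\ref{eq:psi1stord}) reads $P_0 D^{d_{nj}}(P_j)=\widehat{\Rem}_{nj}-[\Psi^{d_{nj}}(P_j)-\Psi^{d_{nj}}(P_0)]$, and $|\Psi^{d_{nj}}(P_j)-\Psi^{d_{nj}}(P_0)|=|\Corr_{P_j}(X_{k_{nj}},Y)-\Corr_0(X_{k_{nj}},Y)|\lesssim\delta^{-1}K_{nj}$ by \ref{it:corr}, so $(P_0 D^{d_{nj}}(P_j))^2\lesssim(\widehat{\Rem}_{nj})^2+\delta^{-2}K_{nj}^2\lesssim(\widehat{\Rem}_{nj})^2+\delta^{-2}K_{nj}$. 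Combining these bounds with \ref{it:sighatapproxsig} gives \ref{it:sighatapproxsig0}.

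The hard part will not be conceptual but the bookkeeping behind \ref{it:sighatapproxsig} and \ref{it:sighatapproxsig0}: writing $(D^{d_{nj}}(P_j))^2$ out explicitly as a linear combination of the finitely many monomials in $\mathcal{F}_n$ evaluated at the data-dependent coordinate $k_{nj}$, verifying that \ref{it:spnawayzero} makes every coefficient of that expansion---and of its $P_j$-versus-$P_0$ difference---bounded by the claimed power of $\delta^{-1}$ uniformly in $k_{nj}$ and $j$, and tracking those powers carefully enough that the stated bounds emerge. The reason the deterministic event $\mathcal{A}_n$ is enough even though $k_{nj}$ is random is precisely that $\mathcal{A}_n$ controls the empirical process uniformly over all $k\in\mathcal{K}_n$ and all $j\le n$ simultaneously; everything else is the triangle inequality together with Lemma~\ref{lem:concests}.
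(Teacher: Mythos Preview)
Your proposal is correct and follows essentially the same route as the paper: items \ref{it:spnawayzero} and \ref{it:sdratio} via Lemma~\ref{lem:concests} and the definition of $J(n,\delta)$, and items \ref{it:sighatapproxsig} and \ref{it:sighatapproxsig0} by expanding $\hat{D}_{nj}$ and $\hat{D}_{nj}^2$ as polynomials of degree at most four in $(x_{k_{nj}},y)$ with coefficients bounded through \ref{it:spnawayzero}, then invoking the uniform empirical-process control encoded in $\mathcal{A}_n$. The only cosmetic differences are that you exploit the algebraic identity $P_j\hat{D}_{nj}=0$ (the paper instead handles $|(P_j\hat{D}_{nj})^2-(P_0\hat{D}_{nj})^2|$ via the factorization $a^2-b^2=(a+b)(a-b)$) and that for \ref{it:sighatapproxsig0} you compare the polynomial coefficients of $(D^{d_{nj}}(P_j))^2$ and $(D^{d_{nj}}(P_0))^2$ directly whereas the paper again factors the difference of squares and bounds $P_0|\hat{D}_{nj}-D_{nj}(P_0)|$; both variants yield the stated bounds.
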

\begin{proof}
By Lemma \ref{lem:concests}, \ref{it:spnXs0Xsq} holds, and using that $j\ge J(n,\delta)$, we see that
\begin{align*}
s_{P_j}^2(X_k)&= s_{P_0}^2(X_k) + s_{P_j}^2(X_k)-s_{P_0}^2(X_k)\ge s_{P_0}^2(X_k) - \max_{k\in\mathcal{K}_n} \left|s_{P_j}^2(X_k)-s_{P_0}^2(X_k)\right|\ge \delta/2.
\end{align*}
The same argument works for $s_{P_j}^2(Y)$, so \ref{it:spnawayzero} holds. Furthermore,
\begin{align*}
\frac{s_{P_0}^2(X_k)}{s_{P_j}^2(X_k)}&\le \frac{s_{P_0}^2(X_k)}{s_{P_0}^2(X_k) - CK_{nj}}= 1 + \frac{CK_{nj}}{s_{P_0}^2(X_k) - CK_{nj}}\le 1+2C\delta^{-1}K_{nj}\le 2,
\end{align*}
where the final two inequalities hold by \ref{it:spnawayzero}. This proves the first part of \ref{it:sdratio}, and the bound on $s_{P_0}^2(Y)/s_{P_j}^2(Y)$ holds by the same argument. For the second result, note that
\begin{align*}
\left|\hat{\sigma}_{nj}^2 - \sigma_{nj}^2\right|&\le \left|(P_j-P_0)\hat{D}_{nj}^2\right| + \left|(P_j\hat{D}_{nj})^2 - (P_0\hat{D}_{nj})^2\right| \\
&\le \left|(P_j-P_0)\hat{D}_{nj}^2\right| + \left|(P_j + P_0)\hat{D}_{nj}\right|\left|(P_j-P_0)\hat{D}_{nj}\right|. \\
\intertext{Using \ref{it:spnawayzero}, the bounds on $X$ and $Y$, and the triangle inequality shows that}
&\lesssim \left|(P_j-P_0)\hat{D}_{nj}^2\right| + \delta^{-1}\left|(P_j-P_0)\hat{D}_{nj}\right|\lesssim \delta^{-2}\Fnorm{\mathbb{G}_{nj}},
\end{align*}
where we have used that $\mathcal{F}_n$ contains all polynomials of $X_k,Y$ of degree at most $4$. By the occurrence of $\mathcal{A}_n$, the final line is upper bounded by a constant times $\delta^{-2}K_{nj}$. This yields \ref{it:sighatapproxsig}.

For \ref{it:sighatapproxsig0}, we will bound $\left|\sigma_{nj}^2 - \Var_{P_0}\left(D_{nj}(P_0)(O)\right)\right|$ and then combine this with \ref{it:sighatapproxsig} using the triangle inequality. We have that
\begin{align*}
&\left|\sigma_{nj}^2 - \Var_{P_0}\left(D_{nj}(P_0)(O)\right)\right|\le \left|P_0\left[\hat{D}_{nj}^2-D_{nj}(P_0)^2\right]\right|+\left(P_0 \hat{D}_{nj}\right)^2.
\intertext{Now we use that $P_0 \hat{D}_{nj} = -\Corr_{P_j}(X_k,Y) + \Corr_{P_0}(X_k,Y) + \widehat{\Rem}_{nj}$ and $(a+b)^2\le 2(a^2+b^2)$ for any real $a,b$ to see that $\left(P_0 \hat{D}_{nj}\right)^2\lesssim \max_{k}\left(\Corr_{P_j}(X_k,Y) - \Corr_{P_0}(X_k,Y)\right)^2 + \left(\widehat{\Rem}_{nj}\right)^2$. By \ref{it:corr} from Lemma \ref{lem:concests} and the fact that $j\ge J(n,\delta)$, the maximum over $k\in\mathcal{K}_n$ is bounded above by a constant times $\delta^{-2}K_{nj}^2\lesssim \delta^{-2}K_{nj}$. Continuing with the above,}
&\lesssim \left|P_0\left(\left[\hat{D}_{nj}+D_{nj}(P_0)\right]\left[\hat{D}_{nj}-D_{nj}(P_0)\right]\right)\right|+\delta^{-2}K_{nj} + \left(\widehat{\Rem}_{nj}\right)^2 \\
&\lesssim \delta^{-1}P_0\left|\hat{D}_{nj}-D_{nj}(P_0)\right|+\delta^{-2}K_{nj} + \left(\widehat{\Rem}_{nj}\right)^2 \\
&\lesssim \delta^{-2}\Fnorm{\mathbb{G}_{nj}}+\delta^{-2}K_{nj} + \left(\widehat{\Rem}_{nj}\right)^2 \\
&\lesssim \delta^{-2}K_{nj} + \left(\widehat{\Rem}_{nj}\right)^2,
\end{align*}
where we used \ref{it:spnawayzero} for the second to last inequality.
\end{proof}

\begin{lemma} \label{lem:remnjbd}
Suppose the conditions of Lemma \ref{lem:spbds}. Under these conditions, the occurrence of $\mathcal{A}_n$ implies that, for all $j=J(n,\delta),\ldots,n$,
\begin{enumerate}[resume*=corr]
  \item $\left|\widehat{\Rem}_{nj}\right|\lesssim \delta^{-5/2} \left(K_{nj}\right)^2$. \label{it:remnjbd}
\end{enumerate}
\end{lemma}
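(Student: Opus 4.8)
The plan is to plug the high-probability estimates of Lemma~\ref{lem:concests} and Lemma~\ref{lem:spbds} directly into the explicit remainder expression~(\ref{eq:exactrem}) (equivalently, into the bound~(\ref{eq:rembd}) from Lemma~\ref{lem:rembd}), recalling that $\widehat{\Rem}_{nj}=\Rem^{d_{nj}}(\hat P_{nj})=\Rem^{d_{nj}}(P_j)$ since we use the empirical distribution $P_j$ as our likelihood estimate. So first I would note that on the event $\mathcal{A}_n$ and for $j\ge J(n,\delta)$, part~\ref{it:spnawayzero} of Lemma~\ref{lem:spbds} gives $s_{P_j}^2(X_k),\,s_{P_j}^2(Y)\ge\delta/2$ for every $k\in\mathcal{K}_n$, so we may take $\tilde\delta=\delta/2$ in Lemma~\ref{lem:rembd} and obtain~(\ref{eq:rembd}) with $\tilde\delta^{-1}=2\delta^{-1}$, and with the ratios $s_0^2(Y)/s_{P_j}^2(Y)$ and $s_0^2(X_k)/s_{P_j}^2(X_k)$ both bounded by $2$ via part~\ref{it:sdratio}.

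Next I would bound each of the four groups of terms on the right-hand side of~(\ref{eq:rembd}) using Lemma~\ref{lem:concests}. The mean-difference terms $(E_{P_j}[X_k]-E_{P_0}[X_k])^2$ and $(E_{P_j}[Y]-E_{P_0}[Y])^2$ are $\lesssim K_{nj}^2$ directly on $\mathcal{A}_n$ (since $(x,y)\mapsto x_k$ and $(x,y)\mapsto y$ lie in $\mathcal{F}_n$). The last two terms are $\lesssim (s_{P_j}(X_k)-s_0(X_k))^2\lesssim \delta^{-1}K_{nj}^2$ and similarly for $Y$, using part~\ref{it:spnXs0X}/\ref{it:spYs0Y}. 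For the first term, $\bigl|s_{P_j}(X_k)s_{P_j}(Y)-s_0(X_k)s_0(Y)\bigr|\lesssim \delta^{-1/2}K_{nj}$ by adding and subtracting $s_{P_j}(X_k)s_0(Y)$, using the bound $s_{P_j}(X_k),s_0(Y)\le 1$, and parts~\ref{it:spnXs0X},~\ref{it:spYs0Y}; multiplying this by $\bigl|\Corr_{P_j}(X_k,Y)-\Corr_0(X_k,Y)\bigr|\lesssim \delta^{-1}K_{nj}$ from part~\ref{it:corr} gives a bound $\lesssim \delta^{-3/2}K_{nj}^2$. Collecting the prefactor $\tilde\delta^{-1}\asymp\delta^{-1}$ out front, the dominant power of $\delta^{-1}$ across all terms is $\delta^{-1}\cdot\delta^{-3/2}=\delta^{-5/2}$, so $\bigl|\widehat{\Rem}_{nj}\bigr|\lesssim \delta^{-5/2}K_{nj}^2$, which is exactly~\ref{it:remnjbd}. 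Since $m=\pm 1$, it drops out of all absolute values, and the bound holds uniformly over $k\in\mathcal{K}_n$, hence for the data-dependent index $d_{nj}=(k_{nj},m_{nj})$.

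There is no real obstacle here: the proof is purely bookkeeping, checking that the worst-case exponent of $\delta^{-1}$ is $5/2$ and that every term is genuinely second order, i.e.\ quadratic in quantities of size $K_{nj}$. The only point requiring a little care is the first term of~(\ref{eq:rembd}): one must verify it is a product of \emph{two} small factors (one $\lesssim\delta^{-1/2}K_{nj}$, the other $\lesssim\delta^{-1}K_{nj}$) rather than a single small factor, so that it too is $O(K_{nj}^2)$ and does not degrade the rate; this is immediate from the algebraic identity~(\ref{eq:exactrem}), which was written precisely so that the remainder is manifestly a sum of products of differences. I would also remark in passing that this lemma, combined with $K_{nj}^2=j^{-1}\log\max\{n,p\}$, is what makes Condition~\ref{it:C3} hold under the choice of $\ell_n$ in~(\ref{eq:ellnfastenough}), and feeds into parts~\ref{it:sighatapproxsig0} and the concentration results used in the proofs of Theorems~\ref{thm:mcklb} and~\ref{thm:mckub}.
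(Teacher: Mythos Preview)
Your proposal is correct and follows essentially the same route as the paper: invoke part~\ref{it:spnawayzero} of Lemma~\ref{lem:spbds} to take $\tilde\delta=\delta/2$ in Lemma~\ref{lem:rembd}, bound the product $|s_{P_j}(X_k)s_{P_j}(Y)-s_0(X_k)s_0(Y)|$ via an add-and-subtract argument, and then plug in the concentration results of Lemma~\ref{lem:concests} together with part~\ref{it:sdratio} to control each term. Your version is simply a bit more explicit in tracking the powers of $\delta^{-1}$ than the paper's proof, which summarizes the same computation in one sentence.
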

\begin{proof}
By Lemma \ref{lem:spbds}, $\min_{k\in\mathcal{K}_n} s_{P_j}^2(X_k) \ge \delta/2$ and $\min_{k\in\mathcal{K}_n} s_{P_j}^2(Y)\ge \delta/2$. By Lemma \ref{lem:rembd}, this yields
\begin{align*}
\left|\widehat{\Rem}_{nj}\right|\lesssim \delta^{-1}\max_{k\in\mathcal{K}_n}\Bigg(&\left|s_{P_j}(X_k)s_{P_j}(Y)-s_0(X_k)s_0(Y)\right|\left|\Corr_{P_j}(X_k,Y)-\Corr_0(X_k,Y)\right| \\
&+ \left(E_{P_j}[X_k]-E_{P_0}[X_k]\right)^2 + \left(E_{P_j}[Y]-E_{P_0}[Y]\right)^2 \\
&+ \frac{s_0^2(Y)}{s_{P_j}^2(Y)}\left[s_{P_j}(X_k)-s_0(X_k)\right]^2 + \frac{s_0^2(X_k)}{s_{P_j}^2(X_k)}\left[s_{P_j}(Y)-s_0(Y)\right]^2\Bigg).
\end{align*}
By the bounds on $X$ and $Y$ and the triangle inequality, $\left|s_{P_j}(X_k)s_{P_j}(Y)-s_0(X_k)s_0(Y)\right|\le \left|s_{P_j}(X_k)-s_{P_0}(X_k)\right| + \left|s_{P_j}(Y)-s_{P_0}(Y)\right|$. Applying \ref{it:sdratio} from Lemma \ref{lem:spbds} and the results of Lemma \ref{lem:concests} to the above yields the result.
\end{proof}

\begin{lemma} \label{lem:eifvarlb}
Let $\gamma$ be as defined in (\ref{eq:gamma}). For a constant $C(\gamma,\delta)>0$ relying on $\gamma$ and $\delta$ only, the occurrence of $\mathcal{A}_n$ implies that, for all $j=\lceil C(\gamma,\delta)\log\max\{n,p\}\rceil,\ldots,n$,
\begin{enumerate}[resume*=corr]
  \item $\hat{\sigma}_{nj}^2\ge \gamma/2$.
\end{enumerate}
\end{lemma}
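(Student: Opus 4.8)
The plan is to derive the bound as a direct corollary of the uniform concentration of $\hat{\sigma}_{nj}^2$ about the true canonical-gradient variance, combined with the deterministic lower bound (\ref{eq:gamma}), via a reverse triangle inequality. Throughout, write $D_{nj}(P_0)$ for $D^{d_{nj}}(P_0)$, the canonical gradient of $\Psi^{d_{nj}}$ at $P_0$ evaluated at the (random, data-dependent) selected index $d_{nj}\in\mathcal{D}_n=\mathcal{K}_n\times\{-1,1\}$. The key point enabling us to handle the random index without any limiting argument is that (\ref{eq:gamma}) is a \emph{uniform} statement over $\mathcal{D}_n$: for every realization of $d_{nj}$ we have, deterministically, $\Var_{P_0}\!\left(D_{nj}(P_0)(O)\right)\ge\gamma$.

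First I would write, by the reverse triangle inequality and (\ref{eq:gamma}),
\[
\hat{\sigma}_{nj}^2 \;\ge\; \Var_{P_0}\!\left(D_{nj}(P_0)(O)\right) - \left|\hat{\sigma}_{nj}^2 - \Var_{P_0}\!\left(D_{nj}(P_0)(O)\right)\right| \;\ge\; \gamma - \left|\hat{\sigma}_{nj}^2 - \Var_{P_0}\!\left(D_{nj}(P_0)(O)\right)\right|.
\]
Then I would control the last term on the event $\mathcal{A}_n$ using item \ref{it:sighatapproxsig0} of Lemma~\ref{lem:spbds} and the remainder bound of Lemma~\ref{lem:remnjbd}, both valid for $j\ge J(n,\delta)$, giving
\[
\left|\hat{\sigma}_{nj}^2 - \Var_{P_0}\!\left(D_{nj}(P_0)(O)\right)\right| \;\lesssim\; \delta^{-2}K_{nj} + \left(\widehat{\Rem}_{nj}\right)^2 \;\lesssim\; \delta^{-2}K_{nj} + \delta^{-5}(K_{nj})^4.
\]
Since $j\ge J(n,\delta)=\lceil 4C^2\delta^{-2}\log\max\{n,p\}\rceil$ forces $K_{nj}^2 = j^{-1}\log\max\{n,p\}\le \delta^2/(4C^2)$, i.e.\ $K_{nj}\lesssim\delta$, the quartic term is dominated: $\delta^{-5}(K_{nj})^4\lesssim\delta^{-2}K_{nj}$. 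Hence there is a universal constant $C'$ with $\left|\hat{\sigma}_{nj}^2 - \Var_{P_0}(D_{nj}(P_0)(O))\right|\le C'\delta^{-2}K_{nj}$, so $\hat{\sigma}_{nj}^2\ge \gamma - C'\delta^{-2}K_{nj}$ for all $j\ge J(n,\delta)$ on $\mathcal{A}_n$.

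Finally, the right-hand side is at least $\gamma/2$ precisely when $C'\delta^{-2}K_{nj}\le\gamma/2$, i.e.\ when $j^{-1}\log\max\{n,p\}\le(\gamma\delta^2/(2C'))^2$, i.e.\ when $j\ge (2C'/(\gamma\delta^2))^2\log\max\{n,p\}$. I would therefore set $C(\gamma,\delta)\equiv\max\!\left\{4C^2\delta^{-2},\,(2C'/(\gamma\delta^2))^2\right\}$: the first term guarantees that the stated range $j\in\{\lceil C(\gamma,\delta)\log\max\{n,p\}\rceil,\ldots,n\}$ lies inside $\{J(n,\delta),\ldots,n\}$ so that Lemmas~\ref{lem:spbds} and~\ref{lem:remnjbd} apply, and the second term guarantees the displayed inequality. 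This completes the argument.

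There is no substantive obstacle: the claim is essentially an immediate consequence of the already-proved uniform concentration of $\hat{\sigma}_{nj}^2$. The only care required is bookkeeping — choosing $C(\gamma,\delta)$ so that the index range is nonempty and contained in the range where the prior lemmas are valid, tracking the ceiling functions, and noting that it is the uniform-in-$d$ form of (\ref{eq:gamma}), not any convergence of $d_{nj}$, that handles the random selected index.
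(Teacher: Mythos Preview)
Your proposal is correct and follows essentially the same approach as the paper's proof: combine item~\ref{it:sighatapproxsig0} of Lemma~\ref{lem:spbds} with Lemma~\ref{lem:remnjbd} to bound $\left|\hat{\sigma}_{nj}^2 - \Var_{P_0}(D_{nj}(P_0)(O))\right|$, then invoke the uniform lower bound~(\ref{eq:gamma}) via the triangle inequality and solve for the threshold on $j$. Your extra step of absorbing the quartic term $\delta^{-5}(K_{nj})^4$ into $\delta^{-2}K_{nj}$ using $K_{nj}\lesssim\delta$ for $j\ge J(n,\delta)$ is a tidy simplification the paper does not make explicit (the paper carries both terms and arrives at the slightly messier constant $C(\gamma,\delta)=C\gamma^{-1/2}\delta^{-2}\max\{\delta^{-3},\gamma^{-3/2}\}$), but the underlying argument is the same.
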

\begin{proof}[Sketch of proof]
Suppose $\mathcal{A}_n$. By \ref{it:sighatapproxsig0} and \ref{it:remnjbd}, for all $j\ge J(n,\delta)$
\begin{align*}
\left|\hat{\sigma}_{nj}^2 - \Var_{P_0}\left(D_{nj}(P_0)(O)\right)\right|&\lesssim \delta^{-2}K_{nj} + \delta^{-10}(K_{nj})^4.
\end{align*}
It is easy to confirm that, for a universal constant $C>0$, the above yields that the left-hand side is upper bounded by $\gamma/2$ for all $j\ge C\gamma^{-1/2}\delta^{-2}\max\left\{\delta^{-3},\gamma^{-3/2}\right\}\log\max\{n,p\}\equiv C(\gamma,\delta)\log\max\{n,p\}\ge J(n,\delta)$. An application of the triangle inequality gives the result.
\end{proof}

The remainder of the results in this section are asymptotic in nature. We omit the dependence on $\delta$ and $\gamma$ in these statements as these quantities are treated as fixed as sample size grows. Throughout we assume that
\begin{align}
&\frac{\log\max\{n,p\}}{\ell_n}\rightarrow 0, \label{eq:ellnlb1} \\
&\beta_n^2\log\frac{n}{\ell_n}\rightarrow 0, \label{eq:ellnlb2} \\
&\limsup_{n\rightarrow\infty}\frac{\ell_n}{n}<1. \label{eq:ellnub}
\end{align}
In view of (\ref{eq:ellnlb1}) and (\ref{eq:ellnlb2}), we see that, roughly, $\ell_n$ grows faster than $\log\max\{n,p\}$ if $\beta_n$ goes to zero faster than $1/\sqrt{\log n}$ and at least as fast as $n\exp(-o(\beta_n^{-2}))$ if $\beta_n$ goes to zero more slowly than $1/\sqrt{\log n}$. Given an $\epsilon>0$, one possible choice of $\ell_n$ that satisfies these properties is
\begin{align*}
\ell_n&= \max\left\{(\log\max\{n,p\})^{1+\epsilon},n\exp(-\beta_n^{-2+\epsilon})\right\}
\end{align*}

We have the following result.
\begin{lemma} \label{lem:nlgenough}
For all $n$ large enough, $\ell_n\ge J(n,\delta)$ and $\ell_n\ge C(\gamma,\delta)\log\max\{n,p\}$ as defined Lemmas \ref{lem:spbds} and \ref{lem:eifvarlb}, respectively.
\end{lemma}
\begin{proof}
This is an immediate consequence (\ref{eq:ellnlb1}) of the fact that $\delta$ and $\gamma$ are fixed as sample size grows.
\end{proof}

\begin{theorem}
\ref{it:C1}, \ref{it:C2}, and \ref{it:C3} hold.
\end{theorem}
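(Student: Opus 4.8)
The plan is to establish all three conditions on the high-probability event $\mathcal{A}_n$ of Lemma~\ref{lem:highprob}, which occurs with probability at least $1-1/n\to 1$. Since \ref{it:C1} asks for convergence in probability to $1$ and \ref{it:C2}, \ref{it:C3} for convergence in probability to $0$ of nonnegative quantities, it suffices to exhibit, for all sufficiently large $n$, a deterministic bound that holds on $\mathcal{A}_n$ and has the desired limit; the complement $\mathcal{A}_n^c$ then contributes only vanishing probability. By Lemma~\ref{lem:nlgenough}, for all large $n$ one has $\ell_n\ge\max\{J(n,\delta),\lceil C(\gamma,\delta)\log\max\{n,p\}\rceil\}$, so every index $j\in\{\ell_n,\dots,n-1\}$ meets the hypotheses of Lemmas~\ref{lem:spbds}, \ref{lem:remnjbd}, and \ref{lem:eifvarlb}; in particular the entire summation range is covered with no boundary terms, and on $\mathcal{A}_n$ we may use $s_{P_j}^2(X_k)\ge\delta/2$ for every $k$, $s_{P_j}^2(Y)\ge\delta/2$, and $\hat{\sigma}_{nj}^2\ge\gamma/2$ for all such $j$.

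For \ref{it:C1}: substituting the lower bounds $s_{P_j}^2(X_k),s_{P_j}^2(Y)\ge\delta/2$, the bound $|\Corr_{P_j}(X_k,Y)|\le1$, and $|X_k|,|Y|\le1$ into the explicit expression for the canonical gradient $D^d(P_j)$ shows that, on $\mathcal{A}_n$, $o\mapsto\hat{D}_{nj}(o)$ is bounded over the support of $P_0$ by a constant depending only on $\delta$, uniformly in $j\ge\ell_n$. Combined with $\hat{\sigma}_{nj}\ge\sqrt{\gamma/2}$ this yields a finite $M=M(\delta,\gamma)$ with $|\hat{D}_{nj}(O)|/\hat{\sigma}_{nj}<M$ almost surely under $P_0$, so on $\mathcal{A}_n$ every conditional probability appearing in \ref{it:C1} equals $1$ and the average is identically $1$.

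For \ref{it:C2}: item~\ref{it:sighatapproxsig} of Lemma~\ref{lem:spbds} gives $|\hat{\sigma}_{nj}^2-\sigma_{nj}^2|\lesssim\delta^{-2}K_{nj}$, and with $\hat{\sigma}_{nj}^2\ge\gamma/2$ this yields $\bigl|\sigma_{nj}^2/\hat{\sigma}_{nj}^2-1\bigr|\le(2/\gamma)\bigl|\hat{\sigma}_{nj}^2-\sigma_{nj}^2\bigr|\lesssim K_{nj}$ on $\mathcal{A}_n$. Using $\sum_{j=\ell_n}^{n-1}j^{-1/2}\le2\sqrt{n}$ and $n-\ell_n\gtrsim n$ (by (\ref{eq:ellnub})),
\begin{align*}
\frac{1}{n-\ell_n}\sum_{j=\ell_n}^{n-1}\left|\frac{\sigma_{nj}^2}{\hat{\sigma}_{nj}^2}-1\right|\lesssim\frac{\sqrt{\log\max\{n,p\}}}{n-\ell_n}\sum_{j=\ell_n}^{n-1}j^{-1/2}\lesssim\sqrt{\frac{\log\max\{n,p\}}{n}},
\end{align*}
which tends to $0$ because $\log n=o(n)$ and $\log p=o(\sqrt{n})$ (the latter is exactly $\beta_n^2\to0$). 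For \ref{it:C3}: item~\ref{it:remnjbd} of Lemma~\ref{lem:remnjbd} gives $|\widehat{\Rem}_{nj}|\lesssim\delta^{-5/2}(K_{nj})^2=\delta^{-5/2}j^{-1}\log\max\{n,p\}$, while $\hat{\sigma}_{nj}^{-1}\le\sqrt{2/\gamma}$; using $\sum_{j=\ell_n}^{n-1}j^{-1}\le1+\log(n/\ell_n)$ and $n-\ell_n\gtrsim n$,
\begin{align*}
\frac{1}{\sqrt{n-\ell_n}}\sum_{j=\ell_n}^{n-1}\hat{\sigma}_{nj}^{-1}\left|\widehat{\Rem}_{nj}\right|\lesssim\frac{\log\max\{n,p\}\bigl(1+\log(n/\ell_n)\bigr)}{\sqrt{n}}.
\end{align*}
Writing $\log\max\{n,p\}\le\log n+\log p$ and using $\log(n/\ell_n)\le\log n$ (valid since $\ell_n\ge1$ for large $n$), the $\log n$ part is $\lesssim(\log n)(1+\log n)/\sqrt{n}\to0$ and the $\log p$ part is $(\log p/\sqrt{n})\bigl(1+\log(n/\ell_n)\bigr)=\beta_n^2+\beta_n^2\log(n/\ell_n)\to0$ by $\beta_n^2\to0$ and (\ref{eq:ellnlb2}); hence \ref{it:C3} holds.

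The easy parts are \ref{it:C1} (a uniform boundedness argument) and \ref{it:C2}. The crux is \ref{it:C3}: the estimated second-order remainder at step $j$ is of size $\log\max\{n,p\}/j$, so accumulating it over $j$ produces a factor $\sum_j j^{-1}$ of order $\log(n/\ell_n)$, and the product $\log\max\{n,p\}\cdot\log(n/\ell_n)/\sqrt{n}$ must vanish. This is precisely what the lower-bound requirement (\ref{eq:ellnlb2}) on $\ell_n$ is designed to enforce --- when $p$ is permitted to grow almost as fast as $e^{\sqrt{n}}$, so that $\beta_n^2$ decays slowly, $\ell_n$ must be chosen almost as large as $n$ --- and it is why the accumulation is handled here by direct sum/integral comparison rather than through Lemma~\ref{lem:seriesconv}, whose $o_{P_0}$ formulation does not accommodate the $n$-dependent $\log\max\{n,p\}$ factors carried by the concentration lemmas.
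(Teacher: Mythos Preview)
Your proof is correct and follows essentially the same approach as the paper's: work on the high-probability event $\mathcal{A}_n$, invoke Lemma~\ref{lem:nlgenough} so that Lemmas~\ref{lem:spbds}--\ref{lem:eifvarlb} apply for all $j\ge\ell_n$, then bound \ref{it:C1} via the uniform bound on $\hat{D}_{nj}/\hat{\sigma}_{nj}$, \ref{it:C2} via item~\ref{it:sighatapproxsig}, and \ref{it:C3} via item~\ref{it:remnjbd} together with (\ref{eq:ellnlb2}) and (\ref{eq:ellnub}). The only noticeable difference is in \ref{it:C2}, where you correctly carry $K_{nj}\asymp j^{-1/2}\sqrt{\log\max\{n,p\}}$ through to get the bound $\sqrt{\log\max\{n,p\}/n}$, whereas the paper's display replaces $\sum K_{nj}$ by $\log\max\{n,p\}\sum j^{-1}$ (apparently a typo for $\sqrt{\log\max\{n,p\}}\sum j^{-1/2}$); either route gives $o(1)$.
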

\begin{proof}
\textbf{\ref{it:C1}:} By Lemma \ref{lem:nlgenough}, we can apply \ref{it:spnawayzero} from Lemma \ref{lem:spbds} and Lemma \ref{lem:eifvarlb} provided $n$ is large enough. In that case, $\frac{\hat{D}_{nj}}{\hat{\sigma}_{nj}}\lesssim \delta^{-1}\gamma^{-1/2}$ for all $j\ge\ell_n$ provided $\mathcal{A}_n$ holds. By Lemma \ref{lem:highprob}, this then occurs with probability at least $1-1/n$, and thus \ref{it:C1} holds.\\\\
\textbf{\ref{it:C2}:} If $\mathcal{A}_n$ holds, then Lemmas \ref{lem:eifvarlb} and \ref{lem:nlgenough} show that, for all $n$ large enough,
\begin{align*}
\frac{1}{n-\ell_n}\sum_{j=\ell_n}^{n-1}\left|\frac{\sigma_{nj}^2}{\hat{\sigma}_{nj}^2}-1\right|&\le \frac{2\gamma^{-1}}{n-\ell_n}\sum_{j=\ell_n}^{n-1}\left|\hat{\sigma}_{nj}^2-\sigma_{nj}^2\right| \lesssim \frac{\gamma^{-1}\delta^{-2}}{n-\ell_n}\log\max\{n,p\}\sum_{j=\ell_n}^{n-1}j^{-1}
\end{align*}
By \ref{it:sighatapproxsig} in Lemma \ref{lem:spbds} and the fact that $\sum_a^b j^{-1}\le \int_{a-1}^b j^{-1}dj$, the right-hand side is has an upper bound proportional to $\frac{\gamma^{-1}\delta^{-2}}{n-\ell_n}\log\max\{n,p\} \log n$. This bound is $o(1)$ by (\ref{eq:ellnub}) and the fact that $\beta_n\rightarrow 0$. The fact that $\mathcal{A}_n$ occurs with probability approaching $1$ (Lemma \ref{lem:highprob}) yields \ref{it:C2}.\\\\
\textbf{\ref{it:C3}:} Suppose that $n$ is large enough so that the results of Lemma \ref{lem:nlgenough} apply. Also suppose that $\mathcal{A}_n$ occurs. We have that
\begin{align*}
\frac{1}{n-\ell_n}\sum_{j=\ell_n}^{n-1} \frac{\widehat{\Rem}_{nj}}{\hat{\sigma}_{nj}}&\lesssim \frac{\gamma^{-1/2}\delta^{-5/2}}{n-\ell_n}\sum_{j=\ell_n}^{n-1} \left(K_{nj}\right)^2 \tag{Lemmas \ref{lem:remnjbd}, \ref{lem:eifvarlb}, and \ref{lem:nlgenough}} \\
&= \frac{\gamma^{-1/2}\delta^{-5/2}}{n-\ell_n}\log\max\{n,p\}\sum_{j=\ell_n}^{n-1} j^{-1} \tag{Eq. \ref{eq:Knj}} \\
&\lesssim \frac{\gamma^{-1/2}\delta^{-5/2}}{n-\ell_n}\log\max\{n,p\}\log\frac{n}{\ell_n} \tag{$\sum_a^b j^{-1}\le \int_{a-1}^b j^{-1}dj$} \\
&= o\left([n-\ell_n]^{-1/2}\right). \tag{Eqs. \ref{eq:ellnlb2} and \ref{eq:ellnub}}
\end{align*}
The fact that $\mathcal{A}_n$ occurs with probability approaching $1$ (Lemma \ref{lem:highprob}) yields \ref{it:C3}.
\end{proof}

Let $k_{n0}$ be a possibly non-unique $k$ maximizer of $\left|\Corr_{P_0}(X_k,Y)\right|$. For each $r>0$, let $\mathcal{K}_n^r\subseteq \mathcal{K}_n$ denote the set of all $k\in\mathcal{K}_n$ such that $\left|\Corr_{P_0}(X_{k_{n0}})\right| - \left|\Corr_{P_0}(X_k,Y)\right|\le r$.

The upcoming theorem uses the following conditions to establish the validity of a hypothesis test of no effect and of the upper bound of our confidence interval, respectively:
\begin{enumerate}[label=M\arabic*)]
  \item \label{it:margin} For some sequence $\{t_n\}$ with $t_n\rightarrow+\infty$, there exists a sequence of non-empty subsets $\Knstar\subseteq\mathcal{K}_n$ such that, for all $n$,
  \begin{align*}
  \inf_{k_1\in \Knstar}\left|\Corr_{P_0}(X_{k_1},Y)\right|\ge \sup_{k_2\in \mathcal{K}_n\backslash \Knstar}\left|\Corr_{P_0}(X_{k_2},Y)\right| + t_n n^{-1/4}\beta_n.
  \end{align*}
  If $\Knstar = \mathcal{K}_n$, then the supremum on the right-hand side is taken to be zero.
  \item \label{it:twosidedci} The conditions of \ref{it:margin} hold, and also
  \begin{align*}
  \textnormal{Diam}(\Knstar)&\equiv \sup_{k_1,k_2\in\Knstar}\left(\left|\Corr_{P_0}(X_{k_1},Y)\right|-\left|\Corr_{P_0}(X_{k_2},Y)\right|\right)=o(n^{-1/2}).
  \end{align*}
\end{enumerate}
The first of these conditions will be used to establish the consistency of a null hypothesis significance test. The second of these conditions is similar to margin conditions used in classification, and will be used to establish the validity of our confidence interval.

\begin{theorem} \label{thm:cor2ndord}
\begin{align}
\frac{1}{n-\ell_n}\sum_{j=\ell_n}^{n-1} \hat{\sigma}_{nj}^{-1}\left[\Psi^{d_{nj}}(P_0)-\Psi_n(P_0)\right]&= O_{P_0}\left(n^{-1/4}\beta_n\right). \label{eq:C4result1}
\end{align}
If also \ref{it:margin}, then the right-hand side of the above can be tightened to $O_{P_0}\left(\textnormal{Diam}(\Knstar)\wedge n^{-1/4}\beta_n\right) + o_{P_0}(n^{1/2})$. If also \ref{it:twosidedci}, then \ref{it:C4} holds.
\end{theorem}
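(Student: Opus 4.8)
The plan is to combine the uniform concentration bounds of Lemmas~\ref{lem:concests}--\ref{lem:eifvarlb} with the elementary fact that every summand is nonpositive. Throughout work on the event $\mathcal{A}_n$ of Lemma~\ref{lem:highprob} (probability $\ge 1-1/n$); off $\mathcal{A}_n$ all the quantities of interest are trivially $o_{P_0}(1)$ after the $\sqrt{n-\ell_n}$ scaling since the summand is bounded by $1$ in absolute value. Write $d_{nj}=(k_{nj},m_{nj})$ with $k_{nj}\in\argmax_{k\in\mathcal{K}_n}|\Corr_{P_j}(X_k,Y)|$ and $m_{nj}=\sgn\Corr_{P_j}(X_{k_{nj}},Y)$. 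Since $\Psi^{d_{nj}}(P_0)=m_{nj}\Corr_{P_0}(X_{k_{nj}},Y)\le|\Corr_{P_0}(X_{k_{nj}},Y)|\le\Psi_n(P_0)$, each summand is $\le 0$. For a matching lower bound, item~\ref{it:corr} of Lemma~\ref{lem:concests} and the optimality of $k_{nj}$ give $|\Corr_{P_0}(X_{k_{nj}},Y)|\ge|\Corr_{P_j}(X_{k_{n0}},Y)|-CK_{nj}\ge\Psi_n(P_0)-2CK_{nj}$ with $C$ depending only on $\delta$; if the estimated sign $m_{nj}$ disagrees with $\sgn\Corr_{P_0}(X_{k_{nj}},Y)$ this already forces $\Psi_n(P_0)\le 3CK_{nj}$. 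Hence, on $\mathcal{A}_n$, $0\ge\Psi^{d_{nj}}(P_0)-\Psi_n(P_0)\ge -CK_{nj}$ for every $j\ge\ell_n$ (relabelling $C$), where $K_{nj}=j^{-1/2}\sqrt{\log\max\{n,p\}}$.

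\textbf{Proof of~(\ref{eq:C4result1}) and of the refinement under \ref{it:margin}.} For~(\ref{eq:C4result1}) I combine the crude bound with $\hat\sigma_{nj}^{-1}\lesssim\gamma^{-1/2}$ (Lemmas~\ref{lem:eifvarlb},~\ref{lem:nlgenough}), $\sum_{j=\ell_n}^{n-1}j^{-1/2}\le 2\sqrt n$, and $n-\ell_n\gtrsim n$ from~(\ref{eq:ellnub}), obtaining $\frac{1}{n-\ell_n}\sum_{j}\hat\sigma_{nj}^{-1}|\Psi^{d_{nj}}(P_0)-\Psi_n(P_0)|\lesssim n^{-1/2}\sqrt{\log\max\{n,p\}}=O(n^{-1/4}\beta_n)$, the last identity using $\beta_n^2=\log p/\sqrt n$. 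Under~\ref{it:margin}, note first that $k_{n0}\in\Knstar$, so $\sup_{k\in\Knstar}|\Corr_{P_0}(X_k,Y)|=\Psi_n(P_0)$; consequently, on $\mathcal{A}_n$, $k_{nj}\notin\Knstar$ would give $|\Corr_{P_0}(X_{k_{nj}},Y)|\le\Psi_n(P_0)-t_nn^{-1/4}\beta_n$, which combined with $|\Corr_{P_0}(X_{k_{nj}},Y)|\ge\Psi_n(P_0)-2CK_{nj}$ forces $K_{nj}\gtrsim t_nn^{-1/4}\beta_n$, i.e.\ $j\le\bar C t_n^{-2}n$. Thus for $j>\bar C t_n^{-2}n$ one has $k_{nj}\in\Knstar$, hence $0\ge\Psi^{d_{nj}}(P_0)-\Psi_n(P_0)\ge-\textnormal{Diam}(\Knstar)$, so $|\Psi^{d_{nj}}(P_0)-\Psi_n(P_0)|\le\textnormal{Diam}(\Knstar)\wedge CK_{nj}$ there. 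Splitting the sum at $\bar C t_n^{-2}n$: the tail $j>\bar C t_n^{-2}n$ contributes $\lesssim\textnormal{Diam}(\Knstar)\wedge n^{-1/4}\beta_n$ (compare $\textnormal{Diam}(\Knstar)$ with $K_{n,n-1}\asymp n^{-1/4}\beta_n$ and use $\sum j^{-1/2}\lesssim\sqrt n$), while the head $\ell_n\le j\le\bar C t_n^{-2}n$, bounded crudely, contributes $\lesssim t_n^{-1}n^{-1/2}\sqrt{\log\max\{n,p\}}=o(n^{-1/4}\beta_n)$ since $t_n\to\infty$; adding these gives the stated tightened bound.

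\textbf{Verification of \ref{it:C4} under \ref{it:twosidedci}.} Here we additionally have $\textnormal{Diam}(\Knstar)=o(n^{-1/2})$, hence $\textnormal{Diam}(\Knstar)\wedge n^{-1/4}\beta_n=o(n^{-1/2})$, so the previous display shows the normalized sum is $o_{P_0}(n^{-1/2})$ provided the head block $\ell_n\le j\le\bar C t_n^{-2}n$ also contributes $o_{P_0}(n^{-1/2})$. This is where the conditions~(\ref{eq:ellnlb1})--(\ref{eq:ellnub}) on $\ell_n$ and the rate of $t_n\to\infty$ enter: they guarantee this block is either empty for all large $n$ (when $\ell_n\ge\bar C t_n^{-2}n$ eventually) or short enough that its contribution is $o_{P_0}(n^{-1/2})$. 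Multiplying the normalized sum by $\sqrt{n-\ell_n}$ and invoking~(\ref{eq:ellnub}) then yields exactly the convergence asserted in~\ref{it:C4}.

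\textbf{Main obstacle.} The one genuinely delicate step is controlling the ``early'' block $\ell_n\le j\lesssim t_n^{-2}n$, where the estimated index $k_{nj}$ need not lie in $\Knstar$ and only the per-term bound $O(K_{nj})$ is available: one must verify that this block is short enough — precisely the role of the lower-growth conditions on $\ell_n$ in~(\ref{eq:ellnlb1})--(\ref{eq:ellnlb2}) and of the rate at which $t_n\to\infty$ in~\ref{it:margin}. Everything else reduces to summing $j^{-1/2}$ and routine bookkeeping with the concentration bounds already proved.
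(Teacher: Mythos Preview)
Your approach is essentially the paper's: work on $\mathcal{A}_n$, derive the per-term bound $0\ge\Psi^{d_{nj}}(P_0)-\Psi_n(P_0)\ge-CK_{nj}$ from Lemma~\ref{lem:concests} and the optimality of $k_{nj}$, sum using $\sum j^{-1/2}\lesssim\sqrt n$ to obtain~(\ref{eq:C4result1}), and then under \ref{it:margin} split the sum at a threshold beyond which $k_{nj}\in\Knstar$, bounding the tail by $\textnormal{Diam}(\Knstar)$ and the head crudely.

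Two places where your writeup diverges from the paper. First, from $k_{nj}\in\Knstar$ you jump to $\Psi^{d_{nj}}(P_0)-\Psi_n(P_0)\ge-\textnormal{Diam}(\Knstar)$ without checking that $m_{nj}=\sgn\Corr_{P_0}(X_{k_{nj}},Y)$. The paper does this explicitly: if the sign were wrong one would have $\Psi_n(P_0)-\Psi^{d_{nj}}(P_0)\ge 2\inf_{k\in\Knstar}|\Corr_{P_0}(X_k,Y)|\ge 2t_n n^{-1/4}\beta_n$, contradicting the per-term bound for $j$ past the threshold. This is easily patched using your first-paragraph inequality, but it should be stated.

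Second, and more substantively, you correctly flag the head block as the ``main obstacle'' but do not resolve it. Your appeal to conditions (\ref{eq:ellnlb1})--(\ref{eq:ellnub}) on $\ell_n$ and to ``the rate of $t_n\to\infty$'' is not justified: \ref{it:margin}/\ref{it:twosidedci} only assert $t_n\to\infty$ with no quantitative rate, and the $\ell_n$ conditions say nothing about $t_n$. With your split point $\bar C t_n^{-2}n$ the head contributes $\lesssim t_n^{-1}n^{-1/2}\sqrt{\log\max\{n,p\}}$, which is not $o(n^{-1/2})$ from $t_n\to\infty$ alone. The paper instead splits at the larger threshold $Cnt_n^{-1}$ and bounds the head by a term of order $n^{-1/2}t_n^{-1/2}$, which is $o(n^{-1/2})$ purely from $t_n\to\infty$; this is the step that actually delivers \ref{it:C4} under \ref{it:twosidedci}.
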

\begin{proof}
Suppose that $\mathcal{A}_n$ holds and $n$ is large enough so that the results of Lemma \ref{lem:nlgenough} apply. For each $j\ge \ell_n$, let $k_{nj}$ represent the $k\in\mathcal{K}_n$ which maximizes $\left|\Corr_{P_j}(X_k,Y)\right|$. Let $m_0=\sgn [\Corr_{P_0}(X_{k_{n0}},Y)]$ and $m_{nj}=\sgn [\Corr_{P_j}(X_{k_{nj}},Y)]$. Then, for a universal constant $C>0$,
\begin{align}
0\ge& m_0\Corr_{P_j}(X_{k_{n0}},Y) - \left|\Corr_{P_j}(X_{k_{nj}},Y)\right| \nonumber \\
=&\, \left[m_0\Corr_{P_0}(X_{k_{n0}},Y) - m_{nj}\Corr_{P_0}(X_{k_{nj}},Y)\right] \nonumber \\
&+ m_0\left[\Corr_{P_j}(X_{k_{n0}},Y)-\Corr_{P_0}(X_{k_{n0}},Y)\right] - m_{nj}\left[\Corr_{P_j}(X_{k_{nj}},Y)-\Corr_{P_0}(X_{k_{nj}},Y)\right] \nonumber \\
\ge&\, \Psi_n(P_0) - \Psi^{d_{nj}}(P_0) - 2\max_{k\in\mathcal{K}_n}\left|\Corr_{P_j}(X_k,Y)-\Corr_{P_0}(X_k,Y)\right| \nonumber \\
\ge&\, \Psi_n(P_0) - \Psi^{d_{nj}}(P_0) - C\delta^{-1}K_{nj}, \label{eq:djd0comp}
\end{align}
where the final inequality holds by Lemma \ref{it:corr}. Using that $\sum_{j=\ell_n}^{n-1} j^{-1/2}\lesssim \sqrt{n}$ and (\ref{eq:ellnub}),
\begin{align*}
\frac{1}{n-\ell_n}\sum_{j=\ell_n}^{n-1} K_{nj}\lesssim \log\max\{n,p\}\frac{\sqrt{n}}{n-\ell_n}\lesssim n^{-1/4}\beta_n.
\end{align*}
By Lemma \ref{lem:eifvarlb}, this then implies that the left-hand side of (\ref{eq:C4result1}) is upper bounded by an $O\left(\gamma^{-1/2}n^{-1/4}\beta_n\right)$ term under $\mathcal{A}_n$, and so Lemma \ref{lem:highprob} yields (\ref{eq:C4result1}).

For the second result, suppose that \ref{it:margin} holds. Observe that, for all $j> Cnt_n^{-1}$ for $C$ as defined in (\ref{eq:djd0comp}), $\Psi_n(P_0) - \Psi^{d_{nj}}(P_0)< t_n n^{-1/4}\beta_n$. Furthermore, $\Psi_n(P_0)-\left|\Corr_{P_0}(X_{k_{nj}})\right|\le \Psi_n(P_0) - \Psi^{d_{nj}}(P_0)$. Thus $k_{nj}\in \Knstar$ as defined in \ref{it:margin}. Furthermore, $m_{nj}$ must equal $\sgn[\Corr_{P_0}(X_{k_{nj}},Y)]$, since otherwise
\begin{align*}
\Psi_n(P_0) - \Psi^{d_{nj}}(P_0)&\ge \left|\Corr_{P_0}(X_{k_{nj}},Y)\right| - \Psi^{d_{nj}}(P_0)\ge 2\inf_{k\in\Knstar}\left|\Corr_{P_0}(X_{k_{nj}},Y)\right|\ge 2t_n n^{-1/4}\beta_n,
\end{align*}
contradicting the fact that $\Psi_n(P_0) - \Psi^{d_{nj}}(P_0)\le t_n n^{-1/4}\beta_n$ per (\ref{eq:djd0comp}). Because $k\in\Knstar$, we see that $\Psi^{d_{nj}}(P_0)\ge\inf_{k\in\Knstar}\left|\Corr_{P_0}\left(X_k,Y\right)\right|$. Hence,
\begin{align}
\frac{1}{n-\ell_n}\sum_{j=\max\{\ell_n,\lceil Cnt_n^{-1}\rceil\}}^{n-1} \left[\Psi^{d_{nj}}(P_0)-\Psi_n(P_0)\right]&\ge -\textnormal{Diam}(\Knstar). \label{eq:classifright}
\end{align}
Further, if $\lceil Cnt_n^{-1}\rceil\ge \ell_n$, (\ref{eq:djd0comp}) yields
\begin{align*}
\sum_{j=\ell_n}^{\lceil Cnt_n^{-1}\rceil} \left[\Psi^{d_{nj}}(P_0)-\Psi_n(P_0)\right]\ge -C\sum_{j=\ell_n}^{\lceil Cnt_n^{-1}\rceil} K_{nj}\ge -C\log\max\{n,p\}\int_{\ell_n-1}^{\lceil Cnt_n^{-1}\rceil} j^{-1/2}dj
\end{align*}
It follows that the left-hand side above is greater than or equal to a positive universal constant times $n^{1/2}t_n^{-1/2}$. Dividing the left by $n-\ell_n$ and applying (\ref{eq:ellnub}) yields that this same result holds with an upper bound on the order of $n^{-1/2}t_n^{-1/2}$. Combining this with (\ref{eq:classifright}) shows that
\begin{align*}
\frac{1}{n-\ell_n}\sum_{j=\ell_n}^{n-1} \left[\Psi^{d_{nj}}(P_0)-\Psi_n(P_0)\right]&\ge -\textnormal{Diam}(\Knstar) + O(n^{-1/2}t_n^{-1/2}).
\end{align*}
Using that $t_n^{-1/2}\rightarrow 0$, $n^{-1/2}t_n^{-1/2}=o(n^{-1/2})$. When proving the first result (\ref{eq:C4result1}) we also showed that the left-hand side is upper-bounded by a positive constant times $-\delta^{-1}n^{-1/4}\beta_n$. Combining with Lemma \ref{lem:eifvarlb} and using that $\mathcal{A}_n$ holds with probability approaching $1$ (Lemma \ref{lem:highprob}) shows that the left-hand side of (\ref{eq:C4result1}) is $O_{P_0}\left(\textnormal{Diam}(\Knstar)\wedge n^{-1/4}\beta_n\right) + o_{P_0}(n^{-1/2})$. If \ref{it:twosidedci} holds, then this expression is $o_{P_0}(n^{-1/2})$, and so \ref{it:C4} holds.
\end{proof}

\section{Computationally efficient implementation of our estimator for the \cite{McKeague&Qian2015} example} \label{app:compcomplex}
In this section, we describe how to implement the estimator in $O(np)$ time. We show that this can be accomplished using $O(p)$ storage when the observations $O_1,\ldots,O_n$ arrive in a stream.


Fix $n$ so that the set $\mathcal{K}_n$ of predictor indices is also fixed. For each $j$, let $P_j$ denote the empirical distribution of the first $j$ observations. Recall the definition of the class $\mathcal{F}_n$ from (\ref{eq:Fndef}), and note that $\mathcal{F}_n$ contains $O(p)$ functions. It is easy to see that, at $j=2$, we can compute $P_j f\equiv E_{P_j}[f(O)]$ for each $f\in\mathcal{F}_n$ using $O(p)$ time and storage. Furthermore, for $j>3$ the fact that $P_jf = f(O_j) + \frac{j-1}{j}P_{j-1}f$ shows that we can compute and save $P_jf$ in $O(p)$ time and storage if we know $O_j$ and $P_{j-1}f$. To attain this storage complexity, we remove $P_{j-2}f$, $f\in\mathcal{F}_n$, from memory for each $j\ge 4$ so that $P_2 f,\ldots,P_{j-2}f$ are not stored in memory.

We now have an algorithm that, at observation $j$, starts with $O_j$ and $P_{j-1} f$, $f\in\mathcal{F}_n$, stored in memory and, after running the steps described in the preceding paragraph, also has $P_j f$, $f\in\mathcal{F}_n$ stored in memory. Given $P_j f$, $f\in\mathcal{F}_n$, one can compute and save $\Cov_{P_j}(X_k,Y)=E_{P_j}[X_k Y] - E_{P_j}[X_k]E_{P_j}[Y]$, $k\in\mathcal{K}_n$, and $s_{P_j}^2(Z)=E_{P_j}[Z^2]-E_{P_j}[Z]^2$, $Z$ equal to $Y$ or $X_k$, $k\in\mathcal{K}_n$, in $O(p)$ time and storage. We can now compute and save $\Corr_{P_j}(X_k,Y)=\frac{\Cov_{P_j}(X_k,Y)}{s_{P_j}(X_k)s_{P_j}(Y)}$, $k\in\mathcal{K}_n$, in $O(p)$ time and storage. If the predictors or outcome are large and their variance small, the described online computation of the sample variance may lead to numerical difficulties. See \cite{Welford1962} for a better estimate of the variance in this setting.

Let $H_j$ denote the collection of (i) the integer $j$, (ii) $P_j f$, $f\in\mathcal{F}_n$, (iii) $s_{P_j}^2(Y)$, and (iv) $\Cov(X_k,Y)$, $s_{P_j}^2(X_k)$ and $\Corr_{P_j}(X_k,Y)$, $k\in\mathcal{K}_n$. For $j\ge 2$, let \textproc{UpdateH} be a function which takes as input $(O_{j+1},H_j)$ and outputs $H_{j+1}$. We have shown that $\textproc{UpdateH}(O_{j+1},H_j)$ can run in $O(p)$ time for any $j\ge 2$. We call a separate function \textproc{InitializeH} on $(O_1,O_2)$ to obtain the initial value $H_2$. This function runs in $O(p)$ time and storage.

Let the function \textproc{Maximizer} be a function that takes as input $H_j$ and returns the $d_j=(k_j,m_j)$ which maximizes $m\Corr_{P_j}(X_k,Y)$ in $d\equiv (k,m)\in\mathcal{K}_n$, thereby allowing us to compute $\hat{\sigma}_{nj}=P_j D^{d_j}(P_j)^2$. Finding $d_j$ involves finding the maximum of $|\mathcal{D}_n|=2p$ numbers, and therefore can be accomplished in $O(p)$ time.

The function \textproc{CalcD} takes as input $H_j$, $O_{j+1}$, and $d_j$ and calculates $D^{d_j}(P_j)(O_{j+1})$. It is easy to see that this can be accomplished in $O(1)$ time and $O(p)$ storage.

For ease of notation in the proceeding paragraph and equation we omit the dependence of $d_j=(k_j,m_j)$ on $j$ in the notation. Since $D^d(P_j)$ is a gradient for $\Psi^d$ at $P_j$ and gradients are mean zero, $P_j D^d(P_j)=0$. For any $d\in\mathcal{D}_n$, tedious but trivial calculations show that
\begin{align*}
P_j D^d(P_j)^2=&\, \left[\frac{2 + \Corr_{P_j}(X_k,Y)^2}{2s_{P_j}^{2}(X_k)s_{P_j}^{2}(Y)}\right] \sum_{r=0}^2 \sum_{s=0}^2 (-1)^{r+s}\binom{2}{r}\binom{2}{s} E_{P_j}[X_k^r Y^s] E_{P_j}[X_k]^{2-r} E_{P_j}[Y]^{2-s} \\
&+ \frac{\Corr_{P_j}(X_k,Y)^2}{4}\sum_{r=0}^4 (-1)^r\binom{4}{r} \left[\frac{E_{P_j}[X_k^r]E_{P_j}[X_k]^{4-r}}{s_{P_j}^{4}(X_k)}+\frac{E_{P_j}[Y^r]E_{P_j}[Y]^{4-r}}{s_{P_j}^{4}(Y)}\right] \\
&- \frac{\Corr_{P_j}(X_k,Y)}{s_{P_j}(X_k)^3 s_{P_j}(Y)}\sum_{r=0}^3 \sum_{s=0}^1 (-1)^{r+s} \binom{3}{r}E_{P_j}[X_k^r Y^s] E_{P_j}[X_k]^{3-r} E_{P_j}[Y]^{1-s} \\
&- \frac{\Corr_{P_j}(X_k,Y)}{s_{P_j}(X_k) s_{P_j}(Y)^3}\sum_{r=0}^1 \sum_{s=0}^3 (-1)^{r+s} \binom{3}{s}E_{P_j}[X_k^r Y^s] E_{P_j}[X_k]^{1-r} E_{P_j}[Y]^{3-s}.
\end{align*}
Observe that all expectations on the right-hand side above are expectations over some $f\in\mathcal{F}_n$ applied to the observed data structure. It follows that the above can be computed in $O(1)$ time using a subset of the $O(p)$ expectation, standard deviation, and correlation estimates stored in $H_j$. Let \textproc{CalcSigHat} denote the function which takes as input $H_j$ and $d_j$ and outputs $\hat{\sigma}_{nj}$. We have shown that \textproc{CalcSigHat}$(H_j,d_j)$ runs in $O(1)$ time.

The pseudocode in \textproc{EstPsi} describes our estimator, with most of the work done in the recursion step described in the function \textproc{Recursion}. Because each call of \textproc{Recursion} runs in $O(p)$ time, the $n-\ell_n=O(n)$ step for loop in \textproc{EstPsi} requires time $O(np)$ time. The storage requirement of each call of \textproc{Recursion} is $O(p)$. Because the code in the for loop in \textproc{EstPsi} deletes the output from the previous recursion step, the total storage requirement of \textproc{EstPsi} is $O(p)$.

\begin{algorithm}[ht]
\caption{Recursion Step for Estimating $\Psi(P_0)$}\label{alg:recursion}
\begin{algorithmic}[2]
\Function{Recursion}{$O_{j+1}$, $\psi_j$, $H_j$, $\bar{\sigma}_j$, $\ell_n$}
\If{$j<\ell_n$} $\psi_{j+1} = 0$ and $\bar{\sigma}_{j+1} = 0$
\Else
  \State $d_j = \textproc{Maximizer}(H_j)$
  \State $\hat{\sigma}_{nj} = \textproc{CalcSigHat}(H_j,d_j)$
  \State $D^{d_j}(P_j)(O_{j+1}) = \textproc{CalcD}(H_j,O_{j+1})$
  \State $\psi_{j+1} = \frac{\psi_j}{\bar{\sigma}_j} + \frac{\Corr_{P_j}(X_{d_j},Y) + D^{d_j}(P_j)(O_{j+1})}{\hat{\sigma}_{nj}}$\Comment{By convention, $0/0=0$.}
  \State $\bar{\sigma}_{j+1} = \frac{1}{j+1}\left[j\bar{\sigma}_j + \hat{\sigma}_{nj}\right]$
\EndIf
\State $H_{j+1} = \textproc{UpdateH}(O_{j+1},H_j)$
\State \Return $\left(\psi_{j+1},\bar{\sigma}_{j+1},H_{j+1}\right)$
\EndFunction
\end{algorithmic}
\end{algorithm}

\begin{algorithm}[ht]
\caption{Estimate $\Psi(P_0)$ Using Sample of Size $n$}\label{alg:SL}
\begin{algorithmic}[2]
\Function{EstPsi}{$n$, $\ell_n$}
\State Read $O_1,O_2$ from data stream
\State Base case: $\psi_2 = 0$, $\bar{\sigma}_2 = 0$, and $H_2 = \textproc{InitializeH}(O_1,O_2)$
\For{$j=2,\ldots,n-1$}
  \State Read $O_{j+1}$ from data stream
  \State $\left(\psi_{j+1},\bar{\sigma}_{j+1},H_{j+1}\right) = \textproc{Recursion}\left(O_{j+1},\psi_j,H_j,\bar{\sigma}_j,\ell_n\right)$
  \State Remove $\left(O_{j+1},\psi_j,H_j,\bar{\sigma}_j\right)$ from memory
\EndFor
\State \Return Point estimate $\psi_n$ and confidence interval $\left[\psi_n\pm 1.96\bar{\sigma}_n/\sqrt{n}\right]$
\EndFunction
\end{algorithmic}
\end{algorithm}

\end{document}